\documentclass{rspublic}
\usepackage{amssymb,amsmath}
\usepackage{rotating}
\usepackage{multirow}
\usepackage{graphicx}
\usepackage{array}
\usepackage{lscape}
\usepackage{amsthm}
\usepackage{enumerate}
\newtheorem{definition}{Definition}
\newtheorem{cor}[theorem]{Corollary}
\numberwithin{equation}{section}

\begin{document}

\title[]{Interplay of symmetries and other integrability quantifiers in finite dimensional integrable nonlinear dynamical systems}

\author[Mohanasubha, Chandrasekar, Senthilvelan and Lakshmanan]{R. Mohanasubha$^1$, V. K. Chandrasekar$^2$, M. Senthilvelan$^1$ and M. Lakshmanan$^1$}

\affiliation{$^1$ Centre for Nonlinear Dynamics, School of Physics,
Bharathidasan University, Tiruchirappalli - 620 024, India\\
$^2$ Centre for Nonlinear Science and Engineering, School of Electrical and Electronics Engineering, SASTRA University, Thanjavur - 613 401, India}

\label{firstpage}

\maketitle

\begin{abstract}{Null forms, Symmetries, Darboux polynomials, Integrating factors and Jacobi last multiplier}
In this work, we establish a connection between the extended Prelle-Singer procedure with other widely used analytical methods to identify integrable systems in the case of $n^{th}$-order nonlinear ordinary differential equations (ODEs). By synthesizing these methods we bring out the interlink between Lie point symmetries, contact symmetries, $\lambda$-symmetries, adjoint-symmetries, null forms, Darboux polynomials, integrating factors, Jacobi last multiplier and generalized $\lambda$-symmetries corresponding to the $n^{th}$-order ODEs. We also prove these interlinks with suitable examples. By exploiting these interconnections, the characteristic quantities associated with different methods can be deduced without solving the associated determining equations. 
\end{abstract}

\section{Introduction}
In two of our earlier works \cite{suba1,suba2}, we have interconnected six widely used analytical methods for solving ordinary differential equations (ODEs), namely extended Prelle-Singer method, Lie symmetry analysis, Jacobi last multiplier method, Darboux method, adjoint-symmetries method and $\lambda$-symmetries approach by considering second- and third-order ODEs. Progressing further, in this paper, we unearth the interconnection between the extended Prelle-Singer procedure with two more procedures, namely (i) contact symmetries and (ii) generalized $\lambda$-symmetries. By establishing these new interconnections we are able to bring the eight different analytical methods under one umbrella. More importantly we prove the presence of the interconnections upto general $n^{th}$-order ODEs. Some of the results we report here are new to the literature as well. We also note here that the first attempt to connect the Prelle-Singer procedure with other methods, in particular $\lambda$-symmetries in second-order ODEs, came from Muriel and Romero \cite{Mur3}. They have shown that $\lambda$-symmetries are nothing but the null forms (with a negative sign) given in Prelle-Singer procedure. In this work, we study this interconnection between the null forms in the extended Prelle-Singer procedure and $\lambda$-symmetries to higher-orders. The other important result which we report in this paper is the interconnection between generalized $\lambda$-symmetries and the extended Prelle-Singer method quantities, namely null forms and integrating factor. It is very difficult to determine both of them by solving their determining equations. However using the proposed interconnections one can obtain these symmetries in a simple and straightforward manner. 

We demonstrate all these interconnections with suitable examples in all orders. For example, in second- and third- order ODEs, we recall the same examples which we considered earlier and demonstrate the new interconnections so that the wider interconnection can now be appreciated. To demonstrate the validity of the interconnections in higher-order we consider an example from fourth-order as well. The realization of the interplay at the $n^{th}$-order is also demonstrated through the $n^{th}$-order free particle equation.

Some of the interconnections are not mere extensions of the interconnections from the case of second and third-order ODEs. The interconnections is achieved by introducing suitable transformations in the Prelle-Singer procedure quantities, namely the null forms, $S_{i},~~i=1,2,\cdots,n-1$ and integrating factor, $R$. These interconnections will be briefly explained in the following sections.

The plan of the paper is as follows. In section 2, we give the definition and the determining equations of the Prelle-Singer procedure for solving $n^{th}$-order ODEs. In addition, we discuss the several well known methods such as Lie point symmetries, contact symmetries, $\lambda$-symmetries, generalized $\lambda$-symmetries, Jacobi last multiplier, Darboux polynomials and adjoint-symmetries methods and their determining equations. We also discuss the known and unknown interconnections among all these methods. In section 3, we show that interconnections among all these methods with examples. Finally we summarize our results in section 4.
\section{Interconnections}
To prove the interconnections among all the above said methods, we start our discussion with the extended Prelle-Singer method. In the following, we discuss the  all other methods in relation to the extended Prelle-Singer method.

Consider an $n^{th}$-order ODE of the following form
 \begin{equation} 
x^{(n)}=\phi(t,x,x^{(1)},x^{(2)},...,x^{(n-1)}),~~x^{(j)}=\frac{d^j x}{dt^j},~j=1,2,\cdots,n, \label{main1}
\end{equation}
where $\phi$ is the function of $t,~x,~x^{(1)},\cdots,x^{(n-1)}$.
\subsection{Extended Prelle-Singer method }
\label{sec2} 
In 1983, Prelle and Singer have proposed a procedure for solving first-order ODEs \cite{Prelle} that presents the solution, if such a solution exists, in terms of elementary functions. Subsequently Duarte et. al. extended the underlying ideas to second-order ODEs and constructed only one integral for a class of equations \cite{duart}. Later three of the present authors have extended the algorithm given by Duarte et. al. in such a way that the extended algorithm will provide two independent integrals for the given second-order ODE \cite{anna_2nd}. The same authors have extended the algorithm to third-order and $n^{th}$-order as well as coupled ODEs and established that the extended Prelle-singer procedure is a stand alone method to determine integrating factors, integrals and the general solution of ODEs of any order including coupled ones, provided it is integrable. For more details about this method, one may refer to Refs.\cite{anna_2nd,anna_3rd,anna_n}.

In the following, we recall briefly the extended Prelle-Singer procedure applicable for $n^{th}$-order ODEs.

\begin{definition}
Consider an $n^{th}$-order ODE (\ref{main1}). Let the ODE (\ref{main1}) admits a first integral $I(t,x,x^{(1)},...,x^{(n-1)})=C$, where $C$ is a constant. Let $S_i,i=1,2,...,n-1,$ and $R$ denote the null forms (which are essentially functions) and the integrating factor, respectively. They can then be determined from the relations \cite{anna_n}
\begin{subequations}
\begin{eqnarray}
D[S_1] & = -\phi_x+S_1\phi_{x^{(n-1)}}+S_1S_{n-1},
\qquad \qquad \qquad \qquad \qquad \qquad\;\; \label{net9}\\
D[S_i] & = -\phi_{x^{(i-1)}}+S_i\phi_{x^{(n-1)}}+S_iS_{i+1}-S_{i-1},
\;i=2,3,\dots,n-2,\label{net10}\\
D[S_{n-1}]&=-\phi_{x^{(n-2)}}+S_{n-1}\phi_{x^{(n-1)}}+S_{n-1}^2-S_{n-2} 
, \qquad \qquad \qquad \quad \;\;\label{net11}\\
D[R]  &= -R(S_{n-1}+\phi_{x^{(n-1)}}),
\qquad \qquad \qquad \qquad \qquad \qquad \qquad \quad\;\;\label{net12}\\
R_{x^{(i)}}S_1 & = -R{S_{1}}_{x^i}+R_{x}S_{i+1}+R{S_{i+1}}_{x}, 
\;\;i=1,2,\dots,n-2,  \quad \qquad\;  \label{net13}\\
R_{x^{(i)}}S_{j+1} & = -R{S_{j+1}}_{x^{(i)}}+R_{x^j}S_{i+1}+R{S_{i+1}}_{x^j},
\;\;i,j=1,2,\dots,n-2,\;\; \label{net14}\\
R_{x^{(i)}} & = R_{x^{(n-1)}}S_{i+1}+R{S_{i+1}}_{x^{(n-1)}},
\;\;i=1,2,\dots,n-2, \qquad \qquad \;\;\label{net15}\\
R_x & = R_{x^{(n-1)}}S_1+R{S_{1}}_{x^{(n-1)}},
\qquad \qquad \qquad \qquad \qquad \qquad \qquad \quad \label{net16}
\end{eqnarray}
\end{subequations}
where the total differential operator $D$ is defined by 
$D=\frac{\partial}{\partial{t}}+\sum_{i=1}^{n}x^{(i)}\frac{\partial}{\partial{x^{(i-1)}}}$. In the above, the null forms are functions such that $S_i(t,x,x^{(1)},\cdots,x^{(n-1)})x^{(i)}dt-S_i(t,x,x^{(1)},\cdots,x^{(n-1)}) dx^{(i-1)}=0,~i=1,2,\cdots n-1$.
\end{definition}
Once we know the null forms $S_i, i=1,2,\cdots n-1,$ and the integrating factor $R$, we can construct the integrals of the ODE (\ref{main1}) through the expression \cite{anna_n}
\begin{equation}
I(t,x,x^{(1)},...,x^{(n-1)}) = \sum_{i=1}^{n}r_i-\int\left[R+\frac{d}{dx^{(n-1)}} 
\left(\sum_{i=1}^{n}r_i
   \right)\right]dx^{(n-1)},
  \label{net17}
\end{equation}
where 
\begin{eqnarray} 
r_1 & = \int R\bigg(\phi+\sum_{i=1}^{n-1}S_i x^{(i)}\bigg)dt,
\qquad r_2 =-\int \bigg(RS_1+\frac{d}{dx}r_1\bigg) dx,\nonumber\\
r_j & =-\int \left[RS_{j-1}+\frac{d}{dx^{(j-1)}}
\left(\sum_{k=1}^{j-1}r_k\right)\right]dx^{(j-1)},\;\;j=3,\ldots,n. \nonumber
\end{eqnarray}
We note here that $I_{t}  = R(\phi+\sum_{i=1}^{n-1}S_i x^{(i)}), ~I_{x}  = -RS_1,
~I_{x^{(i)}}  = -RS_{i+1},\;\;i=1,2,\ldots,n-2, ~I_{x^{(n-1)}} = -R$. Once we know $n$ independent integrals, we can derive the general solution of the given $n^{th}$-order ODE from these integrals.


\subsubsection{Transformations}
In the extended Prelle-Singer method, the null forms $S_i,i=1,2,\cdots,n-1,$ and integrating factor $R$ play a major role in finding the integrals.  

To connect the null forms and integrating factors with other integrability quantifiers, we introduce the following transformations in $S_i$ and $R$, that is
\begin{subequations}
\label{s_tran_eqn}
\begin{eqnarray}
&&S_{n-1}=-\frac{D[V]}{V},~~n \geq 2,\label{sn1eq}\\
&&S_i=\frac{X_i}{V},~~i=1,2,3,...,n-2,~~n>2.\label{ueq}
\end{eqnarray}
\end{subequations}
where $V(t,x,...,x^{(n-1)})$ and $X_i(t,x,...,x^{(n-1)})$ are $n-1$ unknown functions and $D$ is the total differential operator. With this substitution, Eqs.(\ref{net11}), (\ref{net9}) and (\ref{net10}) now become a system of linear equations:
\begin{subequations}
\begin{eqnarray}
D^2[V]&=&D[V]\phi_{x^{(n-1)}}+\phi_{x^{(n-2)}}V+X_{n-2},~~n \geq 2,\label{veq}\\
D[X_1]&=&\phi_{x^{(n-1)}} X_1-\phi_x V,~~~~n>2\label{met141}\\
D[X_i]&=&\phi_{x^{(n-1)}} X_i-\phi_{x^{i-1}} V-X_{i-1},~~i=2,3,...,n-2,~~n>2.\label{met14}
\end{eqnarray}
\end{subequations}
We introduce yet another transformation,
\begin{equation}
R=\frac{V}{F},\label{rvf}
\end{equation}
where $F(t,x,x^{(1)},...,x^{(n-1)})$ is a function to be determined, in (\ref{net12}) so that the latter equation can be rewritten in a compact form in the new variable $F$ as
\begin{equation}  
D[F] = \phi_{x^{(n-1)}}F.\label{met15}
\end{equation}
Solving (\ref{met15}) we can obtain the explicit form of $F$. The transformations given in Eqs.(\ref{s_tran_eqn}) and (\ref{rvf}) interlink the null forms and integrating factor with other integrability quantifiers, say $\lambda$-symmetries, Lie point symmetries, contact symmetries, Darboux polynomials, Jacobi last multiplier, adjoint-symmetries  and generalized $\lambda$-symmetries as we see below. 

\subsection{Lie point symmetry analysis}
Lie point symmetry analysis is one of the powerful methods to derive classes of solutions of differential equations in an algorithmic way. The underlying idea here is to enforce the given equation to be form invariant under an infinitesimal transformation involving independent and dependent variables. The infinitesimal transformations which leave the given equation form invariant are in turn called Lie point symmetries \cite{olv,Bluman}. Our primary interest here is to explore how these Lie point symmetries are intrinsically linked with the other integrability quantifiers.

\begin{definition}
The invariance of Eq.(\ref{main1}) under an one parameter group of Lie point symmetries, corresponding to the infinitesimal transformations 
\begin{equation}
T=t+\varepsilon \,\xi(t,x),~~~X=x+\varepsilon \,\eta(t,x),\quad \epsilon \ll 1,\label{asm1}
\end{equation}
where $\xi(t,x)$ and $\eta(t,x)$ are functions of their arguments and $\varepsilon$ is a small parameter, demands the following condition to be satisfied \cite{olv}:
\begin{equation}
\xi \frac{\partial \phi}{\partial t}+\eta \frac{\partial \phi}{\partial x}+\eta^{(1)}\frac{\partial \phi}{ \partial x^{(1)}}+...+\eta^{(n-1)}\frac{\partial \phi}{ \partial x^{(n-1)}}-\eta^{(n)}=0,\label{liec}
\end{equation} 
where $\eta^{(j)}$ is the $j^{th}$ prolongation, $j=1,2,3,...,n$, of the infinitesimal point transformations (\ref{asm1}) and defined to be  
\begin{equation}
\eta^{(1)}=\dot{\eta}-x^{(1)}\dot{\xi},~~\eta^{(2)}=\dot{\eta}^{(1)}-x^{(2)}\dot{\xi},...,~~\eta^{(n)}=\dot{\eta}^{(n-1)}-x^{(n)}\dot{\xi}.\label{prol}
\end{equation}
Here over dot denotes total differentiation with respect to $t$.
\end{definition}
Substituting the known expression $\phi$ in (\ref{liec}) and solving the resultant equation we can get the Lie point symmetries associated with the given $n^{th}$ order ODE. The associated vector field is given by $\hat{v}=\xi(t,x)\frac{\partial} {\partial t}+\eta(t,x)\frac{\partial} {\partial x}$. 

One may also introduce a characteristics 
\begin{equation}
Q=\eta-x^{(1)}\xi \label{charn}
\end{equation}
 and rewrite the invariance condition (\ref{liec}) in terms of a single variable $Q$ in the form
\begin{equation}  
D^n[Q]  = \phi_{x^{(n-1)}} D^{(n-1)}[Q]+...+\phi_{x^{(1)}} D[Q]+\phi_x Q.
\label{met16}
\end{equation}
One can deduce the coefficient functions $\xi$ and $\eta$ associated with the Lie point
symmetries from out of the class of solutions to (\ref{met16}) which depends only on $x$
and $t$, and also has a linear dependence in $x^{(1)}$. 

\begin{theorem} The connection between null forms in the extended Prelle-Singer procedure and Lie point symmetries is given by
\begin{equation}
D^{n-1}[Q]+S_{n-1}D^{n-2}D[Q]+\cdots+S_2D[Q]+S_1Q=0,\label{point_ps}
\end{equation}
where $D$ represents the total derivative operator.
\end{theorem}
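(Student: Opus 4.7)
The plan is to reinterpret the left-hand side of \eqref{point_ps} as, up to the non-vanishing factor $-R$, the action of the prolonged evolutionary form of the Lie symmetry on the first integral $I$ produced by the Prelle-Singer procedure, and then to argue that this action vanishes.

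I would begin by collecting from Subsection~\ref{sec2} the partial derivatives of the PS first integral, namely $I_x=-RS_1$, $I_{x^{(i)}}=-RS_{i+1}$ for $i=1,\dots,n-2$, and $I_{x^{(n-1)}}=-R$. Next, I would rewrite the Lie point symmetry $\xi\,\partial_t+\eta\,\partial_x$ in evolutionary form $\hat v_Q=Q\,\partial_x$ with characteristic $Q=\eta-x^{(1)}\xi$ as in \eqref{charn}, whose $(n-1)$-th prolongation is
\[
\hat v_Q^{(n-1)} \;=\; Q\,\partial_x+\sum_{i=1}^{n-1} D^{i}[Q]\,\partial_{x^{(i)}}.
\]
A direct substitution of the four displayed identities for the derivatives of $I$ into $\hat v_Q^{(n-1)}(I)$ then gives
\[
\hat v_Q^{(n-1)}(I) \;=\; -R\bigl(S_1 Q+S_2 D[Q]+\cdots+S_{n-1}D^{n-2}[Q]+D^{n-1}[Q]\bigr) \;=\; -R\,\Psi,
\]
where $\Psi$ stands for the left-hand side of \eqref{point_ps}. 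Since $R\neq 0$, the theorem is equivalent to $\hat v_Q^{(n-1)}(I)=0$, i.e.\ to the statement that the prolonged symmetry annihilates the integral supplied by the PS procedure.

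To close the argument without merely invoking that compatibility, I would verify the differential identity $D[\Psi]=(\phi_{x^{(n-1)}}+S_{n-1})\,\Psi$ by differentiating $\Psi$ term by term and substituting both the determining equations \eqref{net9}--\eqref{net11} for the $D[S_k]$ and the Lie invariance condition \eqref{met16} for $D^n[Q]$. The bookkeeping splits into three pieces: the $-\phi_{x^{(k-1)}}D^{k-1}[Q]$ contributions cancel $D^n[Q]-\phi_{x^{(n-1)}}D^{n-1}[Q]$ via \eqref{met16}; the shift terms $-S_{k-1}$ inside each $D[S_k]$ telescope against the $\sum_k S_k D^k[Q]$ piece produced by Leibniz, leaving only a single $S_{n-1}D^{n-1}[Q]$; and the remaining $(\phi_{x^{(n-1)}}+S_{n-1})$-multiples collect into $(\phi_{x^{(n-1)}}+S_{n-1})\Psi$. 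Combined with \eqref{net12} in the form $D[R]=-R(\phi_{x^{(n-1)}}+S_{n-1})$, this yields $D[R\,\Psi]=0$, so $R\,\Psi$ is a first integral of \eqref{main1}; taking the symmetry associated with $I$ (equivalently, the trivial value zero for this constant of motion) delivers \eqref{point_ps}.

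The main obstacle will be the index-shift bookkeeping in the telescoping step, in particular respecting the convention $S_0\equiv 0$ implicit in \eqref{net9} so that the determining equations for $D[S_1]$, $D[S_i]$ and $D[S_{n-1}]$ can be handled by one uniform formula; once that alignment is set, the identity $D[\Psi]=(\phi_{x^{(n-1)}}+S_{n-1})\,\Psi$ drops out in a few lines, and the theorem follows from the direct substitution above.
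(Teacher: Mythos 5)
Your proposal is correct and its core is essentially the paper's own argument: you substitute the Prelle--Singer identities $I_x=-RS_1$, $I_{x^{(i)}}=-RS_{i+1}$, $I_{x^{(n-1)}}=-R$ into the prolonged symmetry acting on $I$ (working in evolutionary form rather than splitting off the vanishing $\xi\,D[I]$ term, which is the same computation) and conclude from $R\neq 0$ that the bracket vanishes. The supplementary identity $D[\Psi]=(\phi_{x^{(n-1)}}+S_{n-1})\Psi$ is a nice consistency check absent from the paper, but it does not change the logical structure --- you still must assume the symmetry annihilates $I$ to get $\Psi=0$ rather than merely $D[R\Psi]=0$ --- and note that it only ``collects'' as you claim if the quadratic terms in the determining equations are read uniformly as $S_kS_{n-1}$ (as in (\ref{net9}) and (\ref{net11})), not as the $S_iS_{i+1}$ printed in (\ref{net10}).
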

\begin{proof}
Let $I(t,x,x^{(1)},\cdots,x^{(n-1)})$ be the first integral of the given $n^{th}$-order ODE (\ref{main1}). Then
\begin{equation}
\hat{v}^{(n-1)}I=\xi\frac{\partial I}{\partial t}+\eta \frac{\partial I}{\partial x}+\eta^{(1)}\frac{\partial I}{\partial x^{(1)}}+\cdots+\eta^{(n-1)}\frac{\partial I}{\partial x^{(n-1)}}=0\label{54po},
\end{equation}
where $\hat{v}^{(n-1)}=\xi\frac{\partial }{\partial t}+\eta \frac{\partial }{\partial x}+\eta^{(1)}\frac{\partial }{\partial x^{(1)}}+\cdots+\eta^{(n-1)}\frac{\partial }{\partial x^{(n-1)}}$. Rewriting Eq. (\ref{54po}) in terms of the characteristics function $Q(=\eta-x^{(1)}\xi)$ and using the Prelle-Singer procedure quantities $S_i$ and $R$ as $I_{t}  = R(\phi+\sum_{i=1}^{n-1}S_i x^{(i)}), ~I_{x}  = -RS_1,
~I_{x^{(i)}}  = -RS_{i+1},\;\;i=1,2,\ldots,n-2, ~I_{x^{(n-1)}} = -R$, we can obtain the expression as
\begin{equation}
(D^{n-1}[Q]+S_{n-1}D^{n-2}[Q]+\cdots+S_2D[Q]+S_1Q)+\xi(\frac{\partial I}{\partial t}+x^{(1)}\frac{\partial I}{\partial x}+\cdots+x^{(n)}\frac{\partial I}{\partial x^{(n-1)}})=0.
\end{equation}
In the above equation, the second term is the total derivative of the first integral and hence it vanishes. Thus we find
\begin{equation}
D^{n-1}[Q]+S_{n-1}D^{n-2}[Q]+\cdots+S_2D[Q]+S_1Q=0.\label{lieconn11}
\end{equation}
Equation (\ref{lieconn11}) establishes the connection between Lie point symmetries and null forms in the Prelle-Singer method. For $n=2$, the above relation reads $D[Q]+S_1Q=0$. Likewise the above relation for third-order ODEs turn out to be $D^2[Q]+S_2D[Q]+S_1Q=0$.
\end{proof}
\subsubsection{Other quantifiers from Lie point symmetries}
\label{sec_poin_other_meth}
Substituting the transformations (\ref{sn1eq}) and (\ref{ueq}) in (\ref{point_ps}) we can rewrite the latter equation in terms of $V$ and $X_i$ which upon solving yields these quantities. Substituting them back in Eqs.(\ref{sn1eq}) and (\ref{ueq}), we can obtain the null forms $S_i$ of the given equation. The associated integrating factor can be obtained by solving Eq.(\ref{net12}). As we demonstrated in our earlier works, the null forms and the integrating factor $R$ can be connected to adjoint-symmetries, Jacobi last multiplier, Darboux polynomials and $\lambda$-symmetries, and we can relate all these quantifiers recursively. Once we know the Lie point symmetries, $\lambda$-symmetries can be determined from (\ref{qlam}) which is given below. From the null forms and the integrating factor, we can find the generalized $\lambda$-symmetries using the relation (\ref{gen_vec_ps}) which is given in Sec. (\ref{ghtff}) below.

\subsection{Contact symmetries}
Several nonlinear ODEs do not admit Lie point symmetries but are proved to be integrable by other methods. To demonstrate the integrability of these nonlinear ODEs in the sense of Lie one should consider more generalized transformations. One such transformation includes velocity dependent (first derivative) terms in the infinitesimal transformations \cite{Bluman}. In the following, we give a brief account of the velocity dependent transformations and how they can be related with null forms and integrating factors that appear in the Prelle-Singer procedure. The last result is new to the literature.

Let us consider a one-parameter group of contact transformations \cite{Bluman}
\begin{equation}
T=t+\varepsilon \,\xi(t,x,x^{(1)}),~X=x+\varepsilon \,\eta(t,x,x^{(1)}),~\dot{X}=x^{(1)}+\varepsilon \, \eta^{(1)}(t,x,x^{(1)}) \quad \epsilon \ll 1,\label{asm}
\end{equation}
where $\xi(t,x,x^{(1)}),~\eta(t,x,x^{(1)})$ and $\eta^{(1)}(t,x,x^{(1)})$ are functions of their arguments and $\varepsilon$ is a small parameter. The functions $\xi$ and $\eta$ determine an infinitesimal contact transformation if it is possible to write them in the form \cite{ci7}
\begin{eqnarray}
\xi(t,x,\dot{x})=-\frac{\partial W} {\partial \dot{x}},~~\eta(t,x,\dot{x})=W-\dot{x}\frac{\partial W} {\partial \dot{x}},~~\eta^{(1)}=\frac{\partial W} {\partial t}+\dot{x}\frac{\partial W} {\partial x},
\end{eqnarray} 
where the characteristic function $W(t,x,\dot{x})$ is an arbitrary function of its arguments. If $W$ is linear in $\dot{x}$ the corresponding contact transformation is an extended point transformation and it holds that $W(t,x,\dot{x})=\eta(t,x)-\dot{x}\xi(t,x)$. The invariance of Eq.(\ref{main1}) under the infinitesimal contact transformation is given by 
\begin{equation}
\xi \frac{\partial \phi}{\partial t}+\eta \frac{\partial \phi}{\partial x}+\eta^{(1)}\frac{\partial \phi}{ \partial x^{(1)}}+...+\eta^{(n-1)}\frac{\partial \phi}{ \partial x^{(n-1)}}-\eta^{(n)}=0,\label{liecc}
\end{equation} 
where $\eta^{(j)}$ is the $j^{th}$ prolongation, $j=1,2,3,...,n$, of the infinitesimal transformation (\ref{asm}). The prolongations are defined as  
\begin{equation}
\eta^{(1)}=\dot{\eta}-x^{(1)}\dot{\xi},~~\eta^{(2)}=\dot{\eta}^{(1)}-x^{(2)}\dot{\xi},...,~~\eta^{(n)}=\dot{\eta}^{(n-1)}-x^{(n)}\dot{\xi},\label{prolc}
\end{equation}
where over dot denotes total differentiation with respect to $t$. The associated contact symmetry vector field is given by $\Omega=\xi(t,x,x^{(1)})\frac{\partial}{\partial t}+\eta(t,x,x^{(1)})\frac{\partial}{\partial x}$.

Analogous to the case of Lie point symmetries (see Eq.(\ref{charn})) one may introduce a characteristics \cite{Bluman}
\begin{equation}
W=\eta-x^{(1)}\xi \label{charnw}
\end{equation}
 and rewrite the invariance condition (\ref{liecc}) in terms of a single variable $W$ in the form
\begin{equation}  
D^n[W]  = \phi_{x^{(n-1)}} D^{(n-1)}[W]+...+\phi_{x^{(1)}} D[W]+\phi_x W.
\label{met16c}
\end{equation}
Solving Eq.(\ref{met16c}) one can get the characteristics $W$. From $W$ one can recover the contact symmetries $\xi$ and $\eta$. However, unlike the Lie point symmetries, it is very difficult to determine them systematically.

\begin{theorem} The connection between the null forms and the contact symmetries is given by
\begin{equation}
D^{n-1}[W]+S_{n-1}D^{n-2}D[W]+\cdots+S_2D[W]+S_1W=0,\label{contact_ps}
\end{equation}
where $D$ represents the total derivative operator.
\end{theorem}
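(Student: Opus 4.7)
The plan is to follow the same template as the proof of the preceding Lie point symmetry theorem, since the structural ingredients are identical: a characteristic function, a prolongation formula built from total derivatives, and the fact that a symmetry annihilates any first integral. The only substantive difference is that $\xi$ and $\eta$ now depend on $x^{(1)}$, but this dependence is absorbed automatically by the total derivative operator appearing in the prolongation formulas~(\ref{prolc}).

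First I would let $I(t,x,x^{(1)},\ldots,x^{(n-1)})$ be a first integral of~(\ref{main1}). Since $\Omega$ is a contact symmetry of the ODE, its prolongation $\Omega^{(n-1)}$ must annihilate $I$, giving
\[
\xi I_t + \eta I_x + \eta^{(1)} I_{x^{(1)}} + \cdots + \eta^{(n-1)} I_{x^{(n-1)}} = 0.
\]
The key algebraic identity I would then establish, by induction on $k$ from~(\ref{prolc}), is
\[
\eta^{(k)} = D^k[W] + x^{(k+1)}\xi, \qquad k=1,2,\ldots,n-1.
\]
The base case reads $\eta^{(1)} - x^{(2)}\xi = D[\eta] - x^{(1)} D[\xi] - x^{(2)}\xi = D[\eta - x^{(1)}\xi] = D[W]$, and the inductive step is identical.

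Substituting this identity regroups the invariance condition into a \emph{characteristic part} $W I_x + D[W] I_{x^{(1)}} + \cdots + D^{n-1}[W] I_{x^{(n-1)}}$ plus the residual $\xi\bigl(I_t + x^{(1)} I_x + \cdots + x^{(n)} I_{x^{(n-1)}}\bigr) = \xi\, D[I]$, which vanishes because $I$ is a first integral. Finally I would substitute the Prelle-Singer identifications $I_x = -R S_1$, $I_{x^{(i)}} = -R S_{i+1}$ for $i=1,\ldots,n-2$, and $I_{x^{(n-1)}} = -R$ into the surviving characteristic part and divide through by $-R$ to obtain~(\ref{contact_ps}).

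The only step that requires genuine care, and which I would flag as the main obstacle, is verifying that the prolongation--characteristic identity $\eta^{(k)} = D^k[W] + x^{(k+1)}\xi$ continues to hold when $\xi,\eta$ are functions of $(t,x,x^{(1)})$ rather than merely $(t,x)$. Once one observes that the total derivative $D$ already encapsulates whatever $x^{(1)}$-dependence $\xi$ and $\eta$ carry, the identity is formally the same as in the point-symmetry case, and the argument collapses, at the symbolic level, to replacing $Q$ by $W$ in the proof of the preceding theorem.
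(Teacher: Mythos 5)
Your proposal is correct and is exactly the argument the paper intends: its own ``proof'' consists of a single sentence deferring to the Lie point symmetry case with $Q$ replaced by $W$, and the details you supply (the identity $\eta^{(k)} = D^k[W] + x^{(k+1)}\xi$, the splitting into a characteristic part plus $\xi\,D[I]=0$, and the Prelle--Singer substitutions) are precisely what that deferral presupposes. Your explicit check that the $x^{(1)}$-dependence of $\xi$ and $\eta$ is harmlessly absorbed by the total derivative is a welcome addition that the paper omits entirely.
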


\begin{proof}
The proof is analogous to the one which we have discussed in the earlier section, that is the connection between the Lie point symmetries and the Prelle-Singer procedure. Here we consider the contact symmetry characteristics $W$ instead of the Lie point symmetry characteristics $Q$.

\end{proof}

\subsubsection{Other quantifiers from contact symmetries}
Substituting Eqs.(\ref{sn1eq}) and (\ref{ueq}) in (\ref{contact_ps}) and solving it, we can find the expressions $V$ and $X_i$ from which one can derive the other quantifiers by following the steps given at the end of the previous sub-section. 

\subsection{$\lambda$-symmetries}
As noted above, the conventional Lie point symmetry analysis has been generalized in several directions such that the nonlinear ODEs which cannot solved by point symmetries can now be integrated with the help of suitable generalized transformations. Another such generalized symmetry is the $\lambda$-symmetry \cite{Mur3,Mur1,Mur4,gae1,gae2}. The components of these vector fields must satisfy a system of determining equations that depend on an arbitrary function $\lambda$, which can be chosen to solve the system easily. When this arbitrary function is chosen to be null, we obtain the classical Lie point symmetries. This method also provides a systematic procedure to find the first integrals and the integrating factors. 
\begin{definition}
Consider an $n^{th}$-order ODE which admits a $\lambda$-symmetry $\tilde{V}=\xi(t,x)\frac{\partial}{\partial t}+\eta(t,x)\frac{\partial}{\partial x}$ for some function $\lambda=\lambda(t,x,x^{(1)},...,x^{(n-1)})$, then the invariance of the $n^{th}$-order ODE under the  $\lambda$-symmetry vector field is given by \cite{Mur1}
\begin{equation}
\tilde{V}^{[\lambda,(n)]}(x^{(n)}-\phi(t,x,x^{(1)},...,x^{(n-1)}))|_{x^{(n)}=\phi}=0,\label{lam_invar}
\end{equation}
where $\tilde{V}^{[\lambda,(n)]}$ is given by $\xi \frac{\partial} {\partial t}+\eta \frac{\partial} {\partial x}+\eta^{[\lambda,(1)]}\frac{\partial} {\partial x^{(1)}}+...+\eta^{[\lambda,(n)]}\frac{\partial} {\partial x^{(n)}}$. Here $\eta^{[\lambda,(1)]}$, $\eta^{[\lambda,(2)]},...,\eta^{[\lambda,(n)]}$ are the first, second, $\cdots$, $n^{th}$ $\lambda$- prolongations respectively whose explicit expressions are given by \cite{Mur1}
\begin{eqnarray}
\eta^{[\lambda,(1)]}&=&(D+\lambda)\eta(t,x)-(D+\lambda)(\xi(t,x))x^{(1)},\label{etlam1} \nonumber\\
\eta^{[\lambda,(2)]}&=&(D+\lambda)\eta^{[\lambda,(1)]}(t,x,x^{(1)})-(D+\lambda)(\xi(t,x))x^{(2)},\label{etlam2}\nonumber \\
\vdots \nonumber \\
\eta^{[\lambda,(n)]}&=&(D+\lambda)\eta^{[\lambda,(n-1)]}(t,x,...,x^{(n-1)})-(D+\lambda)(\xi(t,x))x^{(n)}.\label{etlam3}
\end{eqnarray}
Expanding the $\lambda$-invariance condition (\ref{lam_invar}), we have
\begin{equation}
\xi\phi_t+\eta\phi_x+\eta^{[\lambda,(1)]}\phi_{x^{(1)}}+...+\eta^{[\lambda,(n-1)]}\phi_{x^{(n-1)}}-\eta^{[\lambda,(n)]}=0,
\label{beq1}
\end{equation}
where the $\lambda$-prolongations are given in (\ref{etlam3}).
\end{definition}
The $\lambda$-prolongation (\ref{etlam3}) reduces to the classical Lie point prolongation formula (\ref{liec}) when $\lambda=0$. Solving the invariance condition (\ref{beq1}) we can obtain the explicit forms of $\xi, \eta$ and $\lambda$.


Suppose the given ODE admits Lie point symmetries, then the $\lambda$-symmetries can be derived without solving the invariance condition (\ref{beq1}). The $\lambda$-symmetries can be directly obtained from the Lie point symmetries through the expression \cite{Mur3}
\begin{equation}
\lambda=\frac{D[Q]} {Q},\label{qlam}
\end{equation}
where $D$ is the total differential operator and $Q=\eta-x^{(1)}\xi$ provided the expression given in (\ref{qlam}) satisfies (\ref{beq1}). The associated $\lambda$-symmetry vector field is given by $\tilde{V}=\frac{\partial} {\partial x}$. One can also construct more number of $\lambda$-symmetry vector fields associated with the $n^{th}$-order ODE (\ref{main1}) by solving the invariance condition (\ref{beq1}).

\subsubsection{Connection between $\lambda$-symmetries and Prelle-Singer method}
The interconnection between $\lambda$-symmetries and the Prelle-Singer method for second- and third-order ODEs were discussed elaborately in Refs.\cite{suba1,suba2,Mur3}. Here we generalizes the interlink to $n^{th}$-order ODEs. 
\begin{theorem}
The null forms associated with the $n^{th}$-order ODE (\ref{main1}) in the extended Prelle-Singer procedure can be connected to the $\lambda$-symmetries through the relation
\begin{equation}
 S_1+\sum_{i=1}^{n-1}((D+\lambda)^{i-1}\lambda) S_{i+1}=0,\label{lamus}
\end{equation}
with the assumption that $S_n=1$.
\end{theorem}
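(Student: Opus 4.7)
The plan is to imitate the proof structure of Theorem~1 (the Lie point symmetry case), replacing the ordinary prolongation by the $\lambda$-prolongation and working with the canonical $\lambda$-symmetry vector field $\tilde V=\partial_x$ (so that $\xi=0$, $\eta=1$ and the characteristic is $Q=1$). The key input I will invoke is the $\lambda$-analog of the classical fact that a symmetry annihilates its invariant first integral: if $\tilde V$ is a $\lambda$-symmetry of (\ref{main1}) and $I$ is a first integral that is invariant under it, then $\tilde V^{[\lambda,(n-1)]}I=0$. This is the standard Muriel--Romero characterization and is the direct $\lambda$-lift of the identity $\hat v^{(n-1)}I=0$ used in the previous theorem.

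First I would compute the $\lambda$-prolongations of $\eta=1$, $\xi=0$ from (\ref{etlam3}). Since $D[1]=0$, one gets $\eta^{[\lambda,(1)]}=(D+\lambda)(1)=\lambda$, and then inductively $\eta^{[\lambda,(k)]}=(D+\lambda)^{k-1}\lambda$ for every $k\ge 1$. Writing out the invariance of $I$ under $\tilde V^{[\lambda,(n-1)]}=\partial_x+\sum_{k=1}^{n-1}\eta^{[\lambda,(k)]}\partial_{x^{(k)}}$ therefore yields
\begin{equation*}
I_x+\lambda\,I_{x^{(1)}}+(D+\lambda)\lambda\,I_{x^{(2)}}+\cdots+(D+\lambda)^{n-2}\lambda\,I_{x^{(n-1)}}=0.
\end{equation*}

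Next I would substitute the extended Prelle--Singer identifications recalled after (\ref{net17}), namely $I_x=-RS_1$, $I_{x^{(i)}}=-RS_{i+1}$ for $i=1,\dots,n-2$, together with $I_{x^{(n-1)}}=-R$, which under the convention $S_n=1$ reads $I_{x^{(n-1)}}=-RS_n$. Factoring out the nonzero $-R$ from every term collapses the identity to
\begin{equation*}
S_1+\sum_{i=1}^{n-1}\bigl((D+\lambda)^{i-1}\lambda\bigr)S_{i+1}=0,
\end{equation*}
which is exactly (\ref{lamus}).

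The main obstacle, and the only nontrivial point, is justifying the starting identity $\tilde V^{[\lambda,(n-1)]}I=0$: not every first integral is $\lambda$-invariant, so one has to argue that given a $\lambda$-symmetry there exists a first integral (constructible from the Prelle--Singer null forms and integrating factor) which is annihilated by the $\lambda$-prolongation. In the Lie point case this is implicit in the hypothesis that $I$ is the integral associated to the symmetry; here the same is guaranteed by the Muriel--Romero reduction theorem for $\lambda$-symmetries, which I would cite rather than reprove. Once that is granted, the proof is essentially the same computation as for Theorem~1 with the $\lambda$-prolongation formula substituted in.
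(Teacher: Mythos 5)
Your proposal follows essentially the same route as the paper: assume the first integral $I$ is annihilated by $\tilde V^{[\lambda,(n-1)]}$ with $\tilde V=\partial_x$, expand using $\eta^{[\lambda,(k)]}=(D+\lambda)^{k-1}\lambda$, substitute $I_x=-RS_1$, $I_{x^{(i)}}=-RS_{i+1}$, $I_{x^{(n-1)}}=-R$, and cancel $-R$. The only difference is that you explicitly flag (and defer to Muriel--Romero) the existence of a $\lambda$-invariant first integral, a point the paper simply takes as a hypothesis; this is a welcome clarification but not a different argument.
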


\begin{proof}
Let $I(t,x,...,x^{(n-1)})$ be the first integral of (\ref{main1}) then $R=-I_{x^{(n-1)}}$ is an integrating factor. The total derivative $\frac{dI} {dt}=0$ gives
\begin{equation}
R \phi=I_t+x^{(1)} I_x+...+x^{(n-1)}I_{x^{(n-2)}}.
\end{equation}
Let this $I(t,x,x^{(1)},...,x^{(n-1)})$ also be a first integral of $\tilde{V}^{[\lambda,(n-1)]}$ for some function $\lambda(t,x,x^{(1)},...,x^{(n-1)})$, then 
\begin{equation}
\tilde{V}^{[\lambda,(n-1)]}I=I_x+\eta^{(1)}I_{x^{(1)}}+...+\eta^{(n-1)}I_{x^{(n-1)}}=0,\label{exlam}
\end{equation}
when $\lambda$ is such that $\tilde{V}=\frac{\partial}{\partial x}$ is a $\lambda$-symmetry. Substituting the expressions
\begin{equation}
\eta^{(i)}=(D+\lambda)^{j}\lambda,~~i=j+1,~~j=0,1,2,...,n-2,\label{lampro}
\end{equation}
in Eq.(\ref{exlam}), we get
\begin{equation}
I_x+\lambda I_{x^{(1)}}+(D+\lambda)\lambda I_{x^{(2)}}+...+(D+\lambda)^{n-2}\lambda I_{x^{(n-1)}}=0.\label{ejg}
\end{equation}
Recalling the relations $I_{x}=-RS_1,I_{x^{(i)}}=-RS_{i+1},i=1,2,\ldots,n-2,
I_{x^{(n-1)}} = -R$ from the Prelle-Singer method and substituting them in (\ref{ejg}) and rewriting the later we will end up with (\ref{lamus}).
\end{proof} 
The above result shows that in the case of second-order ODEs, since we have only one null form $S_1$ it is directly connected with the $\lambda$-symmetry through the expression $\lambda=-S_1$ \cite{suba1,Mur3}. In the case of  third-order ODEs, we have two null forms, namely $S_1$ and $S_2$, which can be connected to the $\lambda$-symmetries through the differential relation $D[\lambda]+\lambda^2+S_2\lambda+S_1=0$ \cite{suba2}. From (\ref{lamus}) we infer that in fourth-order ODEs, $\lambda$-symmetries and null forms are connected by the differential relation $(D^2[\lambda]+3\lambda D[\lambda]+\lambda^3)+(D[\lambda]+\lambda^2)S_3+\lambda S_2+S_1=0$. The interconnection persists in all orders and for an $n^{th}$-order ODE the explicit expression is given by (\ref{lamus}). We note here that while deriving (\ref{lamus}) we assumed that $\lambda \neq 0$. In the case $\lambda =0$ we have $I_x=0$ (vide Eq.(\ref{ejg})). 
    
\subsubsection{Other quantifiers from $\lambda$-symmetries}
\label{lam_inet}

From the known $\lambda$-symmetries and by substituting the transformations (\ref{sn1eq}) and (\ref{ueq}) in the expression (\ref{lamus}), we can rewrite the latter equation in terms of $V$ and $X_i$ which upon solving yields these quantities. Substituting them back in Eqs.(\ref{sn1eq}) and (\ref{ueq}), we can obtain the null forms $S_i$ of the given equation. From the null forms, we can obtain the integrating factor through (\ref{net12}). From the Prelle-Singer method quantities we can identify the rest of the integrability quantifiers. 


\subsection{Jacobi last multiplier method}
The Jacobi last multiplier method is yet another important analytical method to prove the integrability of the given dynamical system \cite{jac,jac1}. This method helps to determine the integrals associated with the given equation. The multipliers can also be used to find the Lagrangians of the associated ODE whenever the considered ODE is of even order \cite{Nuc}.  
\begin{definition}
Let us assume that the ODE (\ref{main1}) admits a first integral \\$I(t,x,x^{(1)},...,x^{(n-1)})=M_1/M_2=C$, where $C$ is a constant on solutions, and $M_i,~i=1,2,$'s are the associated Jacobi last multipliers. These multipliers can be determined for the given $n^{th}$-order ODE by solving the following determining equation \cite{Nuc}
\begin{equation}  
D[\log M]+\phi_{x^{(n-1)}}=0.\label{met20}
\end{equation}
Here $D$ represents the total differential operator.~
Substituting the given equation in (\ref{met20}) and solving the resultant equation we can obtain the last multipliers associated with a $n^{th}$ order ODE. The ratio of any two Jacobi last multipliers, say $M_1/M_2$, gives the first integral. Note that $M_1$ and $M_2$ may also be trivially related, that is $M_1=\alpha M_2$, where $\alpha$ is a constant number.
\end{definition}

In the following sub-sections, the connection between Jacobi last multiplier, Lie point symmetries and the Prelle-Singer method is discussed. 

\subsubsection{Jacobi last multiplier (JLM) and Lie point symmetries}
The connection between Lie point symmetries and JLM is known for a long time \cite{Nuc}. It is given by
\begin{equation}
M=\frac{1} {\Delta},~~\Delta \neq 0,\label{hnn}
\end{equation}
where
\begin{equation}
\Delta = \begin{vmatrix}
1 & x^{(1)} & \cdots & x^{(n)} \\
\xi_1 & \eta_1 & \cdots & \eta_{1}^{(n-1)} \\
\vdots & \vdots & \vdots & \vdots \\
\xi_n & \eta_n & \cdots & \eta_{n}^{(n-1)}\end{vmatrix},\label{delta}
\end{equation}
where $(\xi_1,\eta_1)$, $(\xi_2,\eta_2),\cdots,(\xi_n,\eta_n)$  are the Lie point symmetries of the $n^{th}$-order ODE, $\eta_{1}^{(i)}$, $\eta_{2}^{(i)},\cdots,\eta_{n}^{(i)}$, $i=1,2,\cdots, n-1$, are their corresponding prolongations, respectively, and the inverse of $\Delta$ defines the multiplier of the given equation provided that $\Delta \neq 0$. Note that there are cases where Jacobi last multipliers exist even without Lie point symmetries. In such cases the existence of (\ref{hnn}) may not be valid. However, the other quantifiers can be determined as pointed out below in Sec. (\ref{jlm_inten}).

\subsubsection{Jacobi last multiplier and Prelle-Singer method}
The connection between the Jacobi last multiplier and the extended Prelle-Singer method is shown below.
\begin{theorem} The connection between the integrating factor $R$ of the extended Prelle-Singer procedure and the Jacobi last multiplier $M$ is given by
\begin{equation}
R=VM,\label{jlm_ps}
\end{equation}
where $V$ is the function defined by (\ref{s_tran_eqn}).
\end{theorem}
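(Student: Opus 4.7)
The plan is to verify by direct computation that the product $VM$ satisfies the same transport equation (\ref{net12}) that defines the Prelle--Singer integrating factor $R$, so that the identification $R=VM$ follows (up to the usual multiplicative normalization freedom shared by both objects).

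First, I would extract convenient first-order relations for $V$ and for $M$. Rearranging the transformation (\ref{sn1eq}) gives $D[V] = -V\,S_{n-1}$, and rewriting the JLM determining equation (\ref{met20}) gives $D[M] = -M\,\phi_{x^{(n-1)}}$. Using the Leibniz rule for the total differential operator $D$, I would then compute
\begin{equation*}
D[VM] \;=\; D[V]\,M + V\,D[M] \;=\; -V\,S_{n-1}\,M - V M\,\phi_{x^{(n-1)}} \;=\; -VM\bigl(S_{n-1}+\phi_{x^{(n-1)}}\bigr),
\end{equation*}
which is precisely Eq.~(\ref{net12}) with $R$ replaced by $VM$. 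Thus $VM$ is an integrating factor of the same linear transport equation as $R$, and the theorem follows.

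A cleaner repackaging of the same calculation proceeds through the auxiliary function $F$ introduced in (\ref{rvf})--(\ref{met15}). Since $D[F]/F = \phi_{x^{(n-1)}}$ while $D[M]/M = -\phi_{x^{(n-1)}}$, the product satisfies $D[FM]=0$, so $FM$ is a first integral of (\ref{main1}); normalizing it to unity yields $F=1/M$, and substituting back into $R = V/F$ gives $R=VM$ directly.

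The main subtlety, rather than a genuine obstacle, is the normalization: both the integrating factor $R$ and the Jacobi last multiplier $M$ are defined only up to multiplication by a first integral, so the equality $R = VM$ must be read under a consistent choice of that freedom. The $FM=\mathrm{const}$ route above makes the permissible ambiguity explicit and shows that with the natural choice the identification is an equality, not merely a proportionality.
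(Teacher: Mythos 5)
Your proposal is correct and its second route --- comparing the determining equations for $F$ and $M$ to conclude $F=1/M$ and then substituting into $R=V/F$ --- is exactly the argument the paper gives. Your direct verification that $D[VM]=-VM\bigl(S_{n-1}+\phi_{x^{(n-1)}}\bigr)$ reproduces Eq.~(\ref{net12}), and your explicit remark about the shared multiplicative freedom by a first integral, are minor refinements of the same one-line computation (the paper asserts $F=1/M$ without flagging that normalization).
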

\begin{proof}
By comparing Eq.(\ref{met20}) with (\ref{met15}), we observe that
\begin{equation}
F=\frac{1} {M}.\label{sssum1}
\end{equation}
Substituting (\ref{sssum1}) into Eq.(\ref{rvf}), we obtain
\begin{equation}
R=VM.
\end{equation}

Since $R$ and $V$ are known from the Prelle-Singer method, the Jacobi last multiplier can be derived from the Prelle-Singer method itself. This establishes the connection between the Prelle-Singer method and the Jacobi last multiplier method.
\end{proof}
\subsubsection{Other quantifiers from Jacobi last multipliers}\label{jlm_inten}
Suppose if we know the Jacobi last multipliers, then the ratio of the multipliers gives the first integral. Once we know the integral, we can identify the integrating factor $R$ through the relation $R=-I_{x^{(n-1)}}$. Once we know the integrating factor, we can obtain the null forms $S_i$ from Eqs.(\ref{net9}), (\ref{net10}) and (\ref{net11}). We can determine the $\lambda$-symmetries from the null forms through the relation (\ref{lamus}). The rest of the quantities can be constructed as outlined in Sec. (\ref{lam_inet}).

\subsection{Darboux polynomials approach}
Darboux theory of integrability is yet another approach which helps to determine the integrals of the given ODE. The ratio of two Darboux polynomials gives an integral provided they have the same cofactor \cite{darb}. The Darboux theory is extensively studied in the contemporary literature \cite{dumor}. 

\begin{definition}
Let us assume that the ODE (\ref{main1}) admits a first integral \\$I(t,x,x^{(1)},...,x^{(n-1)})=C$, where $C$ is a constant on solutions. Darboux polynomial determining equation for an $n^{th}$-order ODE (\ref{main1}) is given by \cite{darb}
\begin{equation}
D[f]=g(t,x,...,x^{(n-1)})f,\label{Darb}
\end{equation}
where $D$ is the total differential operator and $g(t,x,...,x^{(n-1)})$ is the cofactor. Solving Eq.(\ref{Darb}) through appropriate ansatz on $f$ and $g$, we can obtain the Darboux polynomials $(f)$ and the cofactor $g$. The ratio of two Darboux polynomials $f_1/f_2$ which shares the same cofactor $g$ gives a first integral.
\end{definition}
 The combinations of the Darboux polynomials also define a first integral. For example, let us consider the function $G=\prod_i f_i^{n_i}$, where $f_i'$s are the Darboux polynomials and $n_i'$s are rational numbers. If we can identify a sufficient number of Darboux polynomials,  $f_i'$s, satisfying the relations $D[f_i]/f_i=g_i$, where $g_i$'s are the co-factors, then 
\begin{equation}
D[G]/G=\sum_in_i\frac{D[f_i]}{f_i}=\sum_in_ig_i.\label{laseq}
\end{equation}
We use this property to derive the connection between the Darboux polynomials and the Prelle-Singer procedure.

\begin{cor} The connection between the integrating factor $R$ of the extended Prelle-Singer procedure and the Darboux polynomials is given by
\begin{equation}
R=\frac{V} {F}\label{jlm_df}
\end{equation}
 provided the cofactor of the Darboux polynomial is $\phi_{x^{(n-1)}}$.
\end{cor}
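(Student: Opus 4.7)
The plan is to recognize that the auxiliary function $F$ introduced in the transformation (\ref{rvf}) satisfies, by construction, the very same first-order linear PDE that defines a Darboux polynomial with a specific cofactor. More precisely, the equation governing $F$, namely $D[F]=\phi_{x^{(n-1)}}F$ from (\ref{met15}), is of exactly the Darboux form $D[f]=g\,f$ in (\ref{Darb}) with the particular choice $g=\phi_{x^{(n-1)}}$. So the bulk of the argument is just matching these two equations.

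First I would recall the transformation $R=V/F$ from (\ref{rvf}) and write down the determining equation (\ref{met15}) that $F$ satisfies, which was obtained from (\ref{net12}) after substituting $S_{n-1}=-D[V]/V$. Next, I would write down the definition (\ref{Darb}) of a Darboux polynomial and compare: if we require the cofactor $g$ to coincide with $\phi_{x^{(n-1)}}$, then every solution $f$ of (\ref{Darb}) solves the same equation as $F$, and conversely $F$ itself qualifies as a Darboux polynomial with cofactor $\phi_{x^{(n-1)}}$. Identifying $F$ with such a Darboux polynomial $f$ and plugging back into $R=V/F$ gives precisely the claimed expression (\ref{jlm_df}).

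To make the identification clean, I would briefly note the relation with the Jacobi last multiplier: comparing (\ref{met15}) with (\ref{met20}) gave $F=1/M$ in the previous theorem, so the corollary is essentially saying that $1/M$ is a Darboux polynomial whenever the cofactor matches $\phi_{x^{(n-1)}}$. This also explains why the result is stated as a corollary rather than an independent theorem: it follows from the JLM--Prelle--Singer link together with the observation that Darboux polynomials with cofactor $\phi_{x^{(n-1)}}$ are exactly the reciprocals of Jacobi last multipliers.

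The only subtle point, and the one I would flag, is that the identification of $F$ with a Darboux polynomial is conditional on the hypothesis that the cofactor is $\phi_{x^{(n-1)}}$; for an arbitrary cofactor $g$ the Darboux polynomial solves a different PDE and is not directly related to $R$ through $V/F$. Beyond that caveat, the proof is a one-line matching of two determining equations, so no serious obstacle is expected.
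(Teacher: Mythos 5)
Your proposal is correct and follows essentially the same route as the paper: both arguments amount to matching the determining equation $D[F]=\phi_{x^{(n-1)}}F$ of (\ref{met15}) against the Darboux condition (\ref{Darb}) with cofactor $g=\phi_{x^{(n-1)}}$ and then reading off $R=V/F$ from (\ref{rvf}). The only cosmetic difference is that the paper phrases the identification via the product $G=\prod_i f_i^{n_i}$ of (\ref{laseq}), so that $F$ may be a combination of Darboux polynomials whose cofactors sum to $\phi_{x^{(n-1)}}$ rather than a single one.
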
             
\begin{proof}
Fixing $G=F$ in Eq.(\ref{laseq}) and comparing the later equation with Eq.(\ref{met15}), we find the cofactor of the Darboux polynomial equation as
\begin{equation}
\sum_in_i\frac{D[f_i]}{f_i}=\phi_{x^{(n-1)}}.
\end{equation}
Substituting the Darboux polynomial $G=F$ with the cofactor $\phi_{x^{(n-1)}}$ in Eq.(\ref{rvf}), we obtain $R=V/F$. 

\end{proof}

\subsubsection{Connection between Darboux polynomials and Jacobi last multiplier}
\label{dpandjlm}
By comparing Eqs.(\ref{Darb}) and (\ref{met20}) we can identify the connection between the Darboux polynomials and the Jacobi last multiplier, provided its cofactor $g=\phi_{x^{(n-1)}}$, as
\begin{equation}
M=F^{-1}.\label{Drbre}
\end{equation}
Using the above relation, we can deduce the Jacobi last multiplier from the Darboux polynomials themselves. We note here that this relation is already known in the literature for $n^{th}$-order ODEs provided the cofactor of Darboux polynomials is $\phi_{x^{(n-1)}}$ \cite{suba1}.

\subsubsection{Other quantifiers from Darboux polynomials} 
Darboux polynomials are given as starters, then the ratio of the Darboux polynomials gives the first integral. Once we know the integral, we can identify the integrating factor $R$ through the relation $R=-I_{x^{(n-1)}}$. From the known integrating factor, we can obtain the null forms $S_i$ from Eqs.(\ref{net9}), (\ref{net10}) and (\ref{net11}). From the Prelle-Singer method quantities, we can find the other integrability quantifiers.

 
\subsection{Adjoint-symmetries}
An integrating factor is a set of functions, which on multiplying the given ODE yields a first integral. If the system is self-adjoint, then its integrating factors are necessarily solutions of its linearized system (\ref{met16}). Such solutions are the symmetries of the given system. If the given ODE is not self-adjoint, then its integrating factors are necessarily solutions of the adjoint system of its linearized system. Such solutions are known as adjoint-symmetries of the given ODE \cite{Bluman,blu_pap}.
\begin{definition}
The adjoint ODE of the linearized symmetry condition (\ref{met16}) can be written as 
\begin{equation}
D^n[\Lambda]+D^{n-1}[\phi_{x^{(n-1)}}\Lambda]-D^{n-2}[\phi_{x^{(n-2)}}\Lambda]+...+(-1)^{n-1}\phi_{x}\Lambda=0,\label{met23}
\end{equation}
where $D$ and $\Lambda$ represent the total derivative operator and adjoint-symmetries, respectively. Solutions of Eq.(\ref{met23}) are called adjoint symmetries \cite{blu_pap}.
\end{definition}
 If these solutions also satisfy the adjoint-invariance condition
\begin{equation}
\Lambda_{x^{(n-2m)}}+\sum_{i=1}^{2m-1}(-1)^{i-1}(D^{i-1}(\phi_{x^{(n-2m+i)}}\Lambda))_{x^{(n-1)}}+(D^{2m-1}\Lambda)_{x^{(n-1)}}=0,m=1,\cdots,\frac{n}{2}
\label{adjo1}
\end{equation}
then they become integrating factors \cite{blu_pap}.
 Once we know the integrating factors, we can construct the first integrals. Multiplying the given equation by these integrating factors and rewriting the resultant equation as a perfect differentiable function, we can identify the first integrals, that is
\begin{equation}
\Lambda_i(t,x,x^{(1)},...,x^{(n-1)})(x^{(n)}-\phi(t,x,x^{(1)},...,x^{(n-1)}))=\frac{d} {dt}\big(I_i\big),~i=1,2,...,m.\label{adjint}
\end{equation}


\subsubsection{Adjoint-symmetries and Prelle-Singer method}
The connection between adjoint-symmetries and Prelle-Singer method is given below:
\begin{theorem}
The connection between adjoint symmetries and the Prelle-Singer method is given by
\begin{equation}
R=\Lambda.\label{r_lam_con}
\end{equation}
\end{theorem}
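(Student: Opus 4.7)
The plan is to read off $R=\Lambda$ directly from two characterizations of $dI/dt$: one coming from the adjoint-symmetry framework via \eqref{adjint}, and one coming from the explicit partial derivatives of the first integral given by the extended Prelle--Singer procedure. The point is that both pictures identify the same scalar, namely the coefficient of $(x^{(n)}-\phi)$ in the off-shell total time derivative of the integral, so matching coefficients is all that is really required.

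First I would recall the Prelle--Singer identities listed just before Section 2.1.1, namely
\begin{equation*}
I_t=R\bigl(\phi+\textstyle\sum_{i=1}^{n-1}S_i x^{(i)}\bigr),\quad I_x=-RS_1,\quad I_{x^{(i)}}=-RS_{i+1}\;(i=1,\ldots,n-2),\quad I_{x^{(n-1)}}=-R,
\end{equation*}
and then compute the off-shell total derivative
\begin{equation*}
\frac{dI}{dt}=I_t+\sum_{i=1}^{n-1}x^{(i)}I_{x^{(i-1)}}+x^{(n)}I_{x^{(n-1)}}.
\end{equation*}
Substituting the Prelle--Singer expressions, the contribution $R\sum_{i=1}^{n-1}S_i x^{(i)}$ from $I_t$ cancels exactly against the telescoping sum $\sum_{i=1}^{n-1}x^{(i)}I_{x^{(i-1)}}=-R\sum_{i=1}^{n-1}S_i x^{(i)}$, leaving $dI/dt=R\phi-Rx^{(n)}=-R(x^{(n)}-\phi)$. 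Up to the overall sign convention chosen in \eqref{adjint}, this is exactly the adjoint-symmetry characterization $\Lambda(x^{(n)}-\phi)=dI/dt$, so identifying coefficients of the Newtonian residual $(x^{(n)}-\phi)$ gives $R=\Lambda$.

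To close the argument rigorously, I would note that any $\Lambda$ satisfying the adjoint determining equation \eqref{met23} together with the adjoint-invariance condition \eqref{adjo1} yields a first integral via \eqref{adjint}; conversely, the integrating factor $R$ produced by the Prelle--Singer procedure also produces that very integral through $I_{x^{(n-1)}}=-R$. Since a first integral determines its integrating factor up to the sign convention in $dI/dt=\pm R(x^{(n)}-\phi)$, the two constructions must agree. The only substantive obstacle is purely bookkeeping: verifying that the sign conventions in \eqref{adjint} and in the Prelle--Singer relations are mutually consistent so that the equality reads $R=\Lambda$ rather than $R=-\Lambda$, and checking that $\Lambda$ is indeed an \emph{integrating-factor} adjoint-symmetry (i.e.\ it satisfies \eqref{adjo1}, not just \eqref{met23}), since otherwise the right-hand side of \eqref{adjint} is not a total derivative and the identification fails.
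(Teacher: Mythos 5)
Your argument is correct, but it is not the route the paper takes. The paper works entirely at the level of the determining equations: it eliminates the null forms $S_1,\dots,S_{n-1}$ from the system (\ref{net9})--(\ref{net12}) to obtain a single $n^{th}$-order linear equation (\ref{met24}) satisfied by $R$ alone, and then observes that this equation is literally the adjoint equation (\ref{met23}), so $R$ is an adjoint symmetry. You instead exploit the first integral: from the Prelle--Singer gradient relations you compute the off-shell total derivative $dI/dt=-R\,(x^{(n)}-\phi)$ and match the coefficient of the Newtonian residual against the characterization (\ref{adjint}) of an integrating-factor adjoint symmetry. Your telescoping cancellation is right, and your two caveats are genuinely substantive: (i) with the paper's literal conventions ($I_{x^{(n-1)}}=-R$ versus $\Lambda(x^{(n)}-\phi)=dI/dt$) one actually gets $\Lambda=-R$ for the \emph{same} integral $I$, a sign the paper silently absorbs into the freedom $I\mapsto -I$ (harmless, since (\ref{met23}) is linear and homogeneous, but worth flagging); (ii) the identification of a general adjoint symmetry with an integrating factor requires the adjoint-invariance condition (\ref{adjo1}), which the paper only invokes afterwards. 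What the paper's elimination argument buys is independence from the existence of $I$: it shows directly that any solution $(S_i,R)$ of the determining equations yields an $R$ solving (\ref{met23}), without passing through a first integral or through the external fact (from Bluman--Anco) that a coefficient of a total time derivative automatically satisfies the adjoint equation. What your argument buys is brevity and a cleaner conceptual picture of \emph{why} the two objects coincide --- both are the coefficient of $(x^{(n)}-\phi)$ in $dI/dt$ --- at the cost of assuming the integral exists and borrowing that known result. Either proof is acceptable; if you keep yours, state explicitly that you are comparing the $R$ and $\Lambda$ attached to one and the same first integral, and resolve the sign.
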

\begin{proof}
Rewriting the Eqs.(\ref{net9}),~~(\ref{net10}),~~(\ref{net11}) and (\ref{net12}) as a single equation in one variable, $R$, we find
\begin{equation}  
D^n[R]+D^{n-1}[\phi_{x^{(n-1)}}R]-D^{n-2}[\phi_{x^{(n-2)}}R]+...+(-1)^{n-1}\phi_{x}R=0.\label{met24}
\end{equation}
Comparing the adjoint of the linearized symmetry condition equation (\ref{met23}) and (\ref{met24}) one can conclude that the integrating factor $R$ is nothing but the adjoint symmetry $\Lambda$, that is 
\begin{equation}
R=\Lambda. \label{met25}
\end{equation}
Thus the integrating factor turns out to be the adjoint symmetry of the given $n^{th}$ order nonlinear ODE.
\end{proof}
\subsubsection{Other quantifiers from adjoint-symmetries}
Since adjoint-symmetries are also integrating factors provided they satisfy Eq. (\ref{adjo1}), we can find the null forms from them through the relations (\ref{net9}), (\ref{net10}) and (\ref{net11}). From the null forms, we can find the $\lambda$-symmetries using Eq.(\ref{lamus}). From the $\lambda$-symmetries, the expressions $V$ and $X_i$ can be obtained from the null forms by utilizing the expressions (\ref{sn1eq}) and (\ref{ueq}). We can also obtain the Lie point symmetries with the help of (\ref{lieconn11}). From a knowledge of the integrating factor $R$ and the function $V$, the Darboux polynomials can be determined from (\ref{rvf}). Then the Jacobi last multiplier can also be found from the inverse of the Darboux polynomials. 

\subsection{Generalized $\lambda$-symmetries}
\label{ghtff}
Generalized $\lambda$-symmetries contain the well-known subclasses of vector
fields that have appeared in the literature. The coefficient functions in the infinitesimal generators are not only functions of the dependent and independent variables but also functions of their derivatives upto the order $n-1$ and an arbitrary function $\lambda$. For more details on these generalized $\lambda$-symmetries one may refer \cite{Mur4}. 

\begin{definition}
The generalized vector field is given by
\begin{eqnarray}
 \gamma^{(n)}&=&\xi(t,x,x^{(1)},\cdots,x^{(n-1)})\frac{\partial}{\partial t}+\eta(t,x,x^{(1)},\cdots,x^{(n-1)})\frac{\partial}{\partial x}\nonumber \\ &&+\sum_{i=1}^{n}\zeta^{(i)}(t,x,x^{(1)},\cdots,x^{(n-1)})\frac{\partial}{\partial x^{(i)}}.
\end{eqnarray}
Here $\xi(t,x,x^{(1)},\cdots,x^{(n-1)})$, $\eta(t,x,x^{(1)},\cdots,x^{(n-1)})$ and $\zeta^{(1)}(t,x,x^{(1)},\cdots,x^{(n-1)})$ are the three arbitrary functions which we have to determine. Their prolongations are given by the following expressions \cite{Mur4}
\begin{equation}
 \zeta^{(i)}=D[\zeta^{(i-1)}]-x^{(i)}D[\xi]+\frac{\zeta^{(1)}+x^{(1)}D[\xi]-D[\eta]}{\eta-x^{(1)}\xi}(\zeta^{(i-1)}-x^{(i)}\xi),\label{zet2}
\end{equation}
where $i$ takes the value from $2$ to $n$. Suppose $\xi$, $\eta$ and $\zeta^{(1)}$ are functions of $t$, $x$ and $x^{(1)}$ alone, then the generalized $\lambda$-symmetries are called telescopic vector fields \cite{pucci}.  We note here that the function $\lambda$ is given by $\lambda=(\zeta^{(1)}+x^{(1)}D[\xi]-D[\eta])/(\eta-x^{(1)}\xi)$. 
\end{definition}

\begin{theorem}
The generalized $\lambda$-symmetries are related with the integrating factors and null forms of the Prelle-Singer method through the expression
\begin{equation}
 RS_1+\sum_{i=1}^{n-1}\zeta^{(i)}S_{i+1}=0,\label{gen_vec_ps}
\end{equation}
where for $i+1=n$, we assume that $S_n=1$. The underlying generalized $\lambda$-symmetry vector field reads
\begin{equation}
 R\frac{\partial}{\partial x}+\sum_{i=1}^{n}\zeta_i\frac{\partial}{\partial x^{(i)}}.
\end{equation}
\end{theorem}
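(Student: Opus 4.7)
The plan is to mimic the strategy used earlier for Lie point, contact and $\lambda$-symmetries: apply the proposed symmetry vector field to a first integral $I(t,x,x^{(1)},\ldots,x^{(n-1)})$, invoke the fact that any symmetry annihilates a first integral, and then substitute the Prelle--Singer expressions for the partial derivatives of $I$ to read off the desired algebraic relation.

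More precisely, first I would take the generalized $\lambda$-symmetry to be the vector field $\gamma=R\,\partial_x+\sum_{i=1}^{n}\zeta^{(i)}\partial_{x^{(i)}}$ asserted in the statement, i.e.\ I would set $\xi=0$ and $\eta=R$ in the general form of $\gamma^{(n)}$ given in the Definition preceding the theorem. Since $I$ does not depend on $x^{(n)}$, the invariance condition $\gamma^{(n)}I=0$ collapses to
\begin{equation*}
R\,I_x+\sum_{i=1}^{n-1}\zeta^{(i)}I_{x^{(i)}}=0.
\end{equation*}
Next I would insert the Prelle--Singer identities $I_x=-RS_1$, $I_{x^{(i)}}=-RS_{i+1}$ for $i=1,\ldots,n-2$, and $I_{x^{(n-1)}}=-R$, which were recorded just after Eq.~(\ref{net17}). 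After factoring out $-R$ (nonzero, since $R$ is an integrating factor) and adopting the convention $S_n=1$ so that the last term $\zeta^{(n-1)}(-R)$ fits the pattern $\zeta^{(n-1)}S_n$, the relation becomes
\begin{equation*}
R\,S_1+\sum_{i=1}^{n-1}\zeta^{(i)}S_{i+1}=0,
\end{equation*}
which is exactly Eq.~(\ref{gen_vec_ps}).

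The step I expect to require the most care is justifying that $\eta=R$ together with $\xi=0$ and the prolongation formula (\ref{zet2}) does produce a bona fide generalized $\lambda$-symmetry of the ODE (\ref{main1}), so that the invariance equation $\gamma^{(n)}I=0$ is indeed available. For this I would check consistency with the definition $\lambda=(\zeta^{(1)}+x^{(1)}D[\xi]-D[\eta])/(\eta-x^{(1)}\xi)$, which with $\xi=0$ and $\eta=R$ reduces to $\lambda=(\zeta^{(1)}-D[R])/R$; combined with Eq.~(\ref{net12}), $D[R]=-R(S_{n-1}+\phi_{x^{(n-1)}})$, this yields an explicit expression of $\lambda$ in terms of $\zeta^{(1)}$ and the Prelle--Singer quantities, from which the higher prolongations $\zeta^{(i)}$ can be built recursively via (\ref{zet2}) and shown to satisfy the invariance of the ODE by essentially the same manipulations that underlie Eqs.~(\ref{net9})--(\ref{net12}). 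Once this is in place, the algebraic derivation in the previous paragraph closes the proof.
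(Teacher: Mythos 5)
Your proposal is correct and follows essentially the same route as the paper: apply the generalized vector field with $\xi=0$, $\eta=R$ to a first integral $I$, substitute the Prelle--Singer identities $I_x=-RS_1$, $I_{x^{(i)}}=-RS_{i+1}$, $I_{x^{(n-1)}}=-R$, and divide by $-R$ to obtain Eq.~(\ref{gen_vec_ps}). Your closing remark on verifying that the resulting field is a bona fide generalized $\lambda$-symmetry via $\lambda=(\zeta^{(1)}-D[R])/R$ and (\ref{net12}) is a point of care the paper's proof passes over silently, but it does not change the argument.
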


\begin{proof}
 Let $I(t,x,x^{(1)},...,x^{(n-1)})$ be the first integral of the given $n^{th}$-order ODE, then
 \begin{equation}
  \gamma^{(n-1)}I=0.
 \end{equation}
Expanding the above expression, we get 
\begin{equation}
 \xi I_t+\eta I_x+\zeta^{(1)}I_{x^{(1)}}+\cdots+\zeta^{(n-1)}I_{x^{(n-1)}}=0.\label{25fd}
\end{equation}
We have three unknowns to obtain the generalized $\lambda$-symmetry vector field. It is very difficult to integrate the above Eq.(\ref{25fd}). As our aim is to interrelate the generalized $\lambda$-symmetries with the Prelle-Singer procedure we consider $\xi=0$ and $\eta=R$ and substitute the expressions $I_{x}=-RS_1,I_{x^{(i)}}=-RS_{i+1},i=1,2,\ldots,n-2,
I_{x^{(n-1)}} = -R$ (which appear in the Prelle-Singer procedure) in (\ref{25fd}) and rewrite the latter expression suitably to obtain equation (\ref{gen_vec_ps}). The generalized $\lambda$-symmetries turn out to be of the form
\begin{equation}
 \gamma^{(n)}=R\frac{\partial}{\partial x}+\zeta^{(1)}\frac{\partial}{\partial x^{(1)}}+\cdots+\zeta^{(n)}\frac{\partial}{\partial x^{(n)}}.
\end{equation}
\end{proof}


From the generalized $\lambda$-symmetries, we can find the integrating factors and null forms using the relation (\ref{gen_vec_ps}). Once we know the null forms and integrating factors we can determine all other quantifiers as we described earlier.

\begin{figure}[h!]
  \centering
   \includegraphics[width=0.75\textwidth]{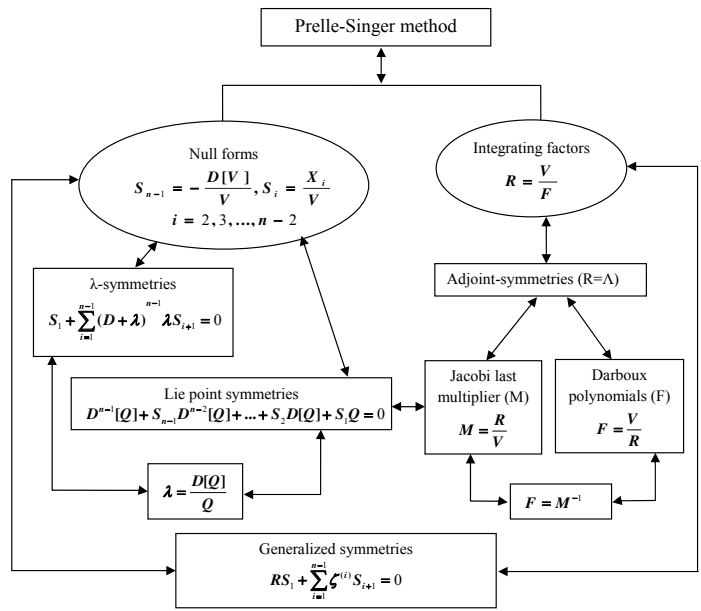}
    \caption{Flow chart connecting Prelle-Singer procedure with other methods for $n^{th}$-order ODEs}
\end{figure}
The broader interconnections which exist between the eight different analytical methods in the literature is summarized in Figure. 1. As we pointed out earlier in the extended Prelle-Singer procedure we have two sets of quantities, namely the null forms and integrating factors, which determine the integrability. How these two quantities are interrelated with other quantifiers is schematically given in Figure 1. The null forms $S_i$ can be  connected through Lie point symmetries, contact symmetries (where $Q$ is replaced by $W$) and $\lambda$-symmetries. On the other hand, the adjoint-symmetries, Jacobi last multiplier and Darboux polynomials can be brought out from the integrating factor. Generalized $\lambda$-symmetries can be related with the Prelle-Singer procedure through the null forms and integrating factor. We wish to note here that the cases where no Lie point symmetries exist, that part of the Figure 1 alone will be absent. Other interconnections between the integrability quantifiers will remains same as in Figure 1.

\section{Examples}
In this section, we prove the above said interconnections by considering an example each in second-order, third-order, fourth-order and $n^{th}$-order differential equations. As we generalize our earlier results, we do not discuss the method of finding all the quantities for second- and third-order ODEs. Instead of that for these two cases we report the method of finding contact symmetries and generalized $\lambda$-symmetries from the Prelle-Singer procedure which we have not discussed earlier. The interconnections for the higher-order ODEs ($n>3$) are demonstrated explicitly.

\subsection{Second-order ODE}
We start our illustration with the same second-order nonlinear ODE which we considered in our earlier paper, namely \cite{suba1}
\begin{equation}
x^{(2)}+3xx^{(1)}+x^3=0.\label{second_example}
\end{equation}
In our earlier work \cite{suba1}, we have studied the interconnection between Prelle-Singer method with Lie point symmetries, Jacobi last multiplier, Darboux polynomial, adjoint-symmetries and $\lambda$-symmetries. In this work, we explore the contact symmetries and the telescopic vector fields of (\ref{second_example}) from the null forms and integrating factors. We recall them from Ref.\cite{suba1}:
\begin{eqnarray}
S_{11}&=&x-\frac{x^{(1)}} {x},\;\;\;\;\;\;\;S_{12}=\frac{(2-x^{(1)}t^2-4tx+t^2x^2)} {t(-2+tx)},\label{s12}\\
R_1&=&\frac{x} {(x^2+x^{(1)})^2},\;\;\;R_2=\frac{-t(1-\frac{tx} {2})} {(tx^{(1)}-x+tx^2)^2}.\label{r22}
\end{eqnarray}
Substituting the null forms (\ref{s12}) in (\ref{contact_ps}) we get
\begin{eqnarray}
 D[W_1]+\bigg(x-\frac{x^{(1)}} {x}\bigg) W_1=0~~\mathrm{and}~~
 D[W_2]+\bigg(\frac{2-x^{(1)}t^2-4tx+t^2x^2} {t(-2+tx)}\bigg)W_2=0. 
\end{eqnarray}
Solving the above equations, we can get the following characteristics, namely
\begin{equation}
 W_1=\frac{x \left(x^2+2 x^{(1)}\right)}{x^2+x^{(1)}},~~W_2=\frac{t (t x-2) \left(x^2+2x^{(1)}\right)}{x^2+x^{(1)}}.
\end{equation}
Using the identity $W=\eta-x^{(1)}\xi$, we can determine the contact symmetries of (\ref{second_example}). The associated vector field reads 
\begin{equation}
 \Omega_1=\frac{x \left(x^2+2 x^{(1)}\right)}{x^2+x^{(1)}}\frac{\partial}{\partial x},~\Omega_2=\frac{t (t x-2) \left(x^2+2x^{(1)}\right)}{x^2+x^{(1)}}\frac{\partial}{\partial x}.
\end{equation}
Substituting the null forms (\ref{s12}) and the integrating factors (\ref{r22}) in (\ref{gen_vec_ps}), we obtain ($\zeta^{(1)}=-RS$)
\begin{equation}
 \zeta_1^{(1)}=\bigg(\frac{x^2-x^{(1)}}{\left(x^2+x^{(1)}\right)^2}\bigg),~\zeta_2^{(1)}=\bigg(\frac{t^2 \left(x^{(1)}-x^2\right)+4 t x-2}{\left(t^2 \left(x^2+x^{(1)}\right)-2 t x+2\right)^2}\bigg).
\end{equation}
The corresponding generalized $\lambda$-symmetries can be written as (telescopic vector fields)
\begin{eqnarray}
\hspace{-0.5cm}\gamma_1^{(2)}&=&-\bigg(\frac{x}{\left(x^2+x^{(1)}\right)^2}\bigg)\frac{\partial}{\partial x}+\bigg(\frac{x^2-x^{(1)}}{\left(x^2+x^{(1)}\right)^2}\bigg)\frac{\partial}{\partial x^{(1)}}+\bigg(\frac{6 x x^{(1)}}{\left(x^2+x^{(1)}\right)^2}\bigg)\frac{\partial}{\partial x^{(2)}},\label{tel_com11}\nonumber\\
\hspace{-1cm}\gamma_2^{(2)}&=&-\bigg(\frac{t (2-t x)}{\left(t^2 \left(x^2+x^{(1)}\right)-2 t x+2\right)^2}\bigg)\frac{\partial}{\partial x}+\bigg(\frac{t^2 \left(x^{(1)}-x^2\right)+4 t x-2}{\left(t^2 \left(x^2+x^{(1)}\right)-2 t x+2\right)^2}\bigg)\frac{\partial}{\partial x^{(1)}}\nonumber \\
\hspace{-1cm}&&-\bigg(\frac{6 (t x-1) (tx^{(1)}+x)}{\left(t^2 \left(x^2+x^{(1)}\right)-2 t x+2\right)^2}\bigg)\frac{\partial}{\partial x^{(2)}}.\label{tel_com21}
\end{eqnarray}
We note here that obtaining the generalized $\lambda$-symmetries by solving the invariance condition associated with it is not a simple task. We have succeeded in determining them by establishing an interconnection with the other quantifiers.

\subsection{Third-order ODEs}
To demonstrate the interconnection in third-order ODEs, we again consider the same example which we considered in our earlier work \cite{suba2} that is  
\begin{equation}
\dddot{x}=\frac{6t\ddot{x}^3} {(x^{(1)})^2}+\frac{6\ddot{x}^2} {x^{(1)}}.\label{exam1}
\end{equation}
As in the previous example we intend to obtain the contact symmetries and generalized $\lambda$-symmetries from other quantifiers which are already known for this equation. We recall the null forms and the integrating factors from our previous work (Ref. \cite{suba2}) which are given by
\begin{eqnarray}
 &&S_{11}=\frac{2 (x^{(2)})^2}{(x^{(1)})^2},~~S_{12}=0,~~S_{13}=0,\label{third_eqn}\\
 &&S_{21}=-\frac{2 x^{(2)} (3 t x^{(2)}+x^{(1)})}{(x^{(1)})^2},~~S_{22}=-\frac{3 x^{(2)} (2 t x^{(2)}+x^{(1)})}{(x^{(1)})^2}\nonumber\\
 &&S_{23}=-\frac{ 6 t (x^{(2)})^2+3 x^{(1)}x^{(2)}}{(x^{(1)})^2},\\
 &&R_1=\frac{(x^{(1)})^2}{(x^{(2)})^2},~~R_2=\frac{(x^{(1)})^3}{(x^{(2)})^2},~~R_3=\frac{(x^{(1)})^4}{(x^{(2)})^2}.
\end{eqnarray}
Substituting the above null forms in (\ref{contact_ps}) we obtain
\begin{eqnarray}
D^2[W_1]-\frac{2 x^{(2)} (3 t x^{(2)}+x^{(1)})}{(x^{(1)})^2}D[W_1]+\frac{2 (x^{(2)})^2}{(x^{(1)})^2}W_1&=&0,\\
D^2[W_2]-\frac{3 x^{(2)} (2 t x^{(2)}+x^{(1)})}{(x^{(1)})^2}D[W_1]&=&0,\\
D^2[W_3]-\frac{ 6 t (x^{(2)})^2+3 x^{(1)}x^{(2)}}{(x^{(1)})^2}D[W_3]&=&0.
\end{eqnarray}
Solving the above equations, we obtain the characteristics as
\begin{equation}
 W_1=\frac{1}{x^{(1)}},~~W_2=\frac{1}{(x^{(1)})^2},~~W_3=t (x^{(1)})^2.
\end{equation}
By rewriting the above characteristics in the form $W=\eta-\dot{x}\xi$, we get the contact symmetries as ($\xi=0$ for first two symmetries, $\eta=0$ for third symmetry)
\begin{equation}
 \Omega_1=\frac{1}{x^{(1)}}\frac{\partial}{\partial x},~~\Omega_2=\frac{1}{(x^{(1)})^2}\frac{\partial}{\partial x},~~\Omega_3=-tx^{(1)}\frac{\partial}{\partial t}.
\end{equation}
The generalized $\lambda$-symmetry vector field component $\zeta^{(1)}$ can be obtained by substituting the null forms and the integrating factors in Eq. (\ref{gen_vec_ps}). The latter equation turns out to be
\begin{eqnarray}
 D[\zeta_1^{(1)}]+\frac{(x^{(2)})^2}{(x^{(1)})^2} (\zeta^{(1)})^2+\bigg(\frac{2 x^{(2)} (3 t x^{(2)}+4 x^{(1)})}{(x^{(1)})^2}\bigg)\zeta^{(1)} +2&=&0,\\
 D[\zeta_2^{(1)}] + \frac{(x^{(2)})^2}{(x^{(1)})^3}(\zeta^{(1)})^2 + \bigg(\frac{6 x^{(2)} (t x^{(2)}+x^{(1)})}{(x^{(1)})^2}\bigg)\zeta^{(1)}&=&0,\\
 D[\zeta_3^{(1)}] +\frac{(x^{(2)})^2}{(x^{(1)})^4}(\zeta^{(1)})^2 +\bigg(\frac{4 x^{(2)} (3 t x^{(2)}+2 x^{(1)})}{(x^{(1)})^2}\bigg)\zeta^{(1)}&=&0. 
\end{eqnarray}
Solving the above equations, we find
\begin{equation}
 \zeta_1^{(1)}=\frac{x^{(1)} (t x^{(2)}+x^{(1)})}{t (x^{(2)})^2},~~\zeta_2^{(1)}=-\frac{2 (x^{(1)})^2}{x^{(2)}},~~
 \zeta_3^{(1)}=-\frac{(x^{(1)})^3}{x^{(2)}}.
\end{equation}
From the above functions and the integrating factor $R$ and using the expressions (\ref{zet2}), we can get the other components $\zeta^{(2)}$ and $\zeta^{(3)}$ in the generalized $\lambda$-symmetries. 
The generalized $\lambda$-symmetries are given by
\begin{eqnarray}
 \gamma_1^{(3)}&=&\frac{(x^{(1)})^2}{(x^{(2)})^2}\frac{\partial}{\partial x}+(\frac{x^{(1)} (t x^{(2)}+x^{(1)})}{t (x^{(2)})^2})\frac{\partial}{\partial x^{(1)}}+(\frac{2 x^{(1)}}{t x^{(2)}}+\frac{6 t x^{(2)}}{x^{(1)}}+6)\frac{\partial}{\partial x^{(2)}}\nonumber \\
 &&+6 \left(\frac{18 t^2 (x^{(2)})^3}{(x^{(1)})^3}+\frac{28 t (x^{(2)})^2}{(x^{(1)})^2}+\frac{3}{t}+\frac{15 x^{(2)}}{x^{(1)}}\right)\frac{\partial}{\partial x^{(3)}},\nonumber \\
 \gamma_2^{(3)}&=& \frac{(x^{(1)})^3}{(x^{(2)})^2}\frac{\partial}{\partial x}-\frac{2 (x^{(1)})^2}{x^{(2)}}\frac{\partial}{\partial x^{(1)}}-6 (2 t x^{(2)}+x^{(1)})\frac{\partial}{\partial x^{(2)}} \nonumber \\&&
 -\frac{12 x^{(2)} \left(18 t^2 (x^{(2)})^2+19 t x^{(1)} x^{(2)}+5 (x^{(1)})^2\right)}{(x^{(1)})^2} \frac{\partial}{\partial x^{(3)}},\nonumber \\
 \gamma_3^{(3)}&=& \frac{(x^{(1)})^4}{(x^{(2)})^2}\frac{\partial}{\partial x} -\frac{(x^{(1)})^3}{x^{(2)}} \frac{\partial}{\partial x^{(1)}}-2 x^{(1)} (3 t x^{(2)}+2 x^{(1)}) \frac{\partial}{\partial x^{(2)}} \nonumber \\ &&-\frac{6 x^{(2)} \left(18 t^2 (x^{(2)})^2+22 t x^{(1)} x^{(2)}+7 (x^{(1)})^2\right)}{x^{(1)}} \frac{\partial}{\partial x^{(3)}}.
\end{eqnarray}
We have obtained the generalized $\lambda$-symmetries without solving the invariance condition with the help of established interconnections. 
\subsection{Fourth-order ODEs}
The interconnections in fourth-order ODEs are  yet to be demonstrated in the literature. So we intend to consider an example which is governed by the fourth-order ODE, namely \cite{Bluman}
\begin{equation}
x^{(4)}=\frac{4}{3}\frac{(x^{(3)})^2}{x^{(2)}}.\label{fourth_example}
\end{equation}
Equation (\ref{fourth_example}) was discussed in Ref. \cite{Bluman}, where the authors have studied second-order symmetries, Lie point symmetries and integrals of the above equation. The other quantifiers are not known for this equation. 

To obtain the null forms in the Prelle-Singer procedure, we have to solve the determining equations  (\ref{net9})-(\ref{net11}). Doing so, we find
\begin{eqnarray}
\small
&&\hspace{-1.2cm}S_{11}=0,~~S_{21}=\frac{(x^{(3)})^2}{6(x^{(2)})^2-3x^{(1)}x^{(3)}},~~S_{31}=\frac{x^{(3)}(-9(x^{(2)})^2+4x^{(1)}x^{(3)})}{6(x^{(2)})^3-3x^{(1)}x^{(2)}x^{(3)}},\nonumber\\
&&\hspace{-1.2cm}S_{12}=0,~~S_{22}=0,~~S_{32}=-\frac{x^{(3)}}{x^{(2)}},~~S_{13}=\frac{(x^{(3)})^3}{18(x^{(2)})^3-3x^{(1)}x^{(2)}x^{(3)}},\nonumber\\
&&\hspace{-1.2cm} S_{23}=\frac{(x^{(3)})^2}{6(x^{(2)})^2-x^{(1)}x^{(3)}},~~S_{33}=\frac{(x^{(3)})^3(-(9x^{(2)})^2+x^{(1)}x^{(3)})}{6(x^{(2)})^3-x^{(1)}x^{(2)}x^{(3)}},\nonumber\\
&&\hspace{-1.2cm}S_{14}=0,~S_{24}=0,~S_{34}=-\frac{4}{3}\frac{x^{(3)}}{x^{(2)}},~S_{15}=0,~S_{25}=0,~S_{35}=-\frac{3x^{(2)}+4tx^{(3)}}{3tx^{(2)}}.\label{4_ex_nul_for}
\end{eqnarray}

We will utilize $S_{3i}$, to establish the integrability of the given equation. The rest of null forms will be used to demonstrate the interconnections. From the null forms $S_{3i},~i=1,2,3,4,5$, with Eq.(\ref{net12}), we can obtain the integrating factors as
\begin{subequations}
\label{4_exa_int}
\begin{eqnarray}
&&\hspace{-0.9cm} R_1=\frac{3(x^{(2)})^4(-2(x^{(2)})^2+x^{(1)}x^{(3)})}{(x^{(3)})^5},~~
 R_2= \frac{x^{(2)}}{(x^{(3)})^2},\\
&&\hspace{-0.9cm} R_3=\frac{3(-6(x^{(2)})^3+x^{(1)}x^{(2)}x^{(3)})}{(x^{(3)})^3},~~R_4=27\frac{(x^{(2)})^4}{(x^{(3)})^4},~~R_5=\frac{t}{3^{\frac{5}{3}}(x^{(2)})^{\frac{4}{3}}}.\label{integral_fourth}
\end{eqnarray}
\end{subequations}
The null forms (\ref{4_ex_nul_for}) and the integrating factors (\ref{4_exa_int}) should satisfy the constraints given in Eqs.(\ref{net13})-(\ref{net16}). 
Since we are dealing a fourth order ODE, the above forms can lead to the four independent integrals. Their forms are given below. 

From the null forms $S_1, S_2$ and $S_3$ we can get the $\lambda$-symmetries by solving the  relation (\ref{lamus}). The resultant expressions read 

\begin{equation}
\lambda_1=\frac{x^{(2)}}{x^{(1)}},~~\lambda_2=0,~~\lambda_3=\frac{1}{t}+\frac{x^{(2)}}{x^{(1)}},~~\lambda_4=\frac{1}{t},~~\lambda_5=\frac{-x+t(x^{(1)}+tx^{(2)})}{t(tx^{(1)}-x)}.
\end{equation}
On the other hand, the null forms $S_{12}, S_{22}$ and $S_{32}$ provide another $\lambda$, that is
\begin{equation}
\lambda_6=\frac{x^{(1)}}{x}.
\end{equation}
Once $\lambda$ is known we can proceed to find the characteristics with the help of (\ref{charn}) which in turn read
\begin{equation}
Q_1=-x^{(1)},~~Q_2=1,~~Q_3=-tx^{(1)},~~Q_4=t,~~Q_5=tx-t^2x^{(1)}~~Q_6=x.
\end{equation}
From these characteristics, we can derive the Lie point symmetries which leave Eq.(\ref{liec}) invariant, that is
\begin{equation}
\hat{v}_1=\frac{\partial}{\partial t},~\hat{v}_2=\frac{\partial}{\partial x},~\hat{v}_3=t\frac{\partial}{\partial t},~~\hat{v}_4=t\frac{\partial}{\partial x},~\hat{v}_5=t^2\frac{\partial}{\partial t}+tx \frac{\partial}{\partial x}~\hat{v}_6=x\frac{\partial}{\partial x}.
\end{equation}

In the following we illustrate the interconnections by considering the vector fields $\hat{v}_1$, $\hat{v}_2$, $\hat{v}_3$ and $\hat{v}_4$.

From the $\lambda$-symmetries we can find the functions $V,~X_1$ and $X_2$ which are of the form (vide Eqs.(\ref{veq}), (\ref{met141}) and (\ref{met14}))
\begin{subequations}
\begin{eqnarray}
&&\hspace{-1cm} V_1=\frac{(x^{(2)})^7(6(x^{(2)})^2-3x^{(1)}x^{(2)}x^{(3)})}{(x^{(3)})^6},~~X_{11}=0,~~X_{21}=\frac{(x^{(2)})^8}{(x^{(3)})^4},\\
&&\hspace{-1cm} V_2=x^{(2)},~~X_{12}=0,~~X_{22}=0,\\
&&\hspace{-1cm} V_3=\frac{6(x^{(2)})^3-x^{(1)}x^{(2)}x^{(3)}}{x^{(3)}},~~X_{13}=\frac{(x^{(3)})^2}{3},~~X_{23}=x^{(2)}x^{(3)},\\
&&\hspace{-1cm} V_4=\frac{(x^{(2)})^4}{x^{(3)}},~~X_{14}=0,~~X_{24}=0.
\end{eqnarray}
\end{subequations}

From the integrating factors and the quantity $V$, we can find the Darboux polynomials associated with the Eq.(\ref{fourth_example}) through the relation $F=\frac{V_i}{R_i},~i=1,2,3,4$. The obtained Darboux polynomials are given by
\begin{equation}
F_1=-\frac{(x^{(2)})^4}{x^{(3)}},~~F_2=F_3=(x^{(3)})^2,~~F_4=\frac{(x^{(3)})^2}{27}.
\end{equation}
Once we know the integrating factors and null forms, we can obtain the component $\zeta^{(1)}$ of the generalized vector field using the Eq.(\ref{gen_vec_ps}). The resultant expressions read
\begin{eqnarray}
&&\zeta_1^{(1)}=\frac{x^{(1)} x^{(2)}}{x (x^{(3)})^2},~~
\zeta_2^{(1)}=\frac{3 (x^{(2)})^5 \left(x^{(1)} x^{(3)}-2 (x^{(2)})^2\right)}{x^{(1)} (x^{(3)})^5}, \nonumber \\
&&\zeta_3^{(1)}=\frac{3 x^{(2)} \left(x^{(1)} x^{(2)} x^{(3)}-6 (x^{(2)})^3\right)}{x^{(1)} (x^{(3)})^3},~~
\zeta_4^{(1)}=\frac{27 (x^{(2)})^5}{x^{(1)} (x^{(3)})^4}.
\end{eqnarray}
The generalized $\lambda$-symmetries are then given by
\begin{eqnarray}
 \gamma_1^{(4)}&=& \frac{3(x^{(2)})^4(-2(x^{(2)})^2+x^{(1)}x^{(3)})}{(x^{(3)})^5} \frac{\partial}{\partial x}+\frac{x^{(1)} x^{(2)}}{x (x^{(3)})^2} \frac{\partial}{\partial x^{(1)}} +\frac{(x^{(2)})^2}{x (x^{(3)})^2} \frac{\partial}{\partial x^{(2)}}\nonumber \\ &&+\frac{x^{(2)}}{x x^{(3)}}\frac{\partial}{\partial x^{(3)}} +\frac{4}{3 x} \frac{\partial}{\partial x^{(4)}},\nonumber \\
 \gamma_2^{(4)}&=&  \frac{x^{(2)}}{(x^{(3)})^2} \frac{\partial}{\partial x} +\frac{3 (x^{(2)})^5 \left(x^{(1)} x^{(3)}-2 (x^{(2)})^2\right)}{x^{(1)} (x^{(3)})^5} \frac{\partial}{\partial x^{(1)}}\nonumber \\&&+\frac{3 \left(x^{(1)} (x^{(2)})^4 x^{(3)}-2 (x^{(2)})^6\right)}{x^{(1)} (x^{(3)})^4}  \frac{\partial}{\partial x^{(2)}} +\frac{4 \left(x^{(1)} (x^{(2)})^3 x^{(3)}-2 (x^{(2)})^5\right)}{x^{(1)} (x^{(3)})^3}  \frac{\partial}{\partial x^{(3)}}\nonumber \\&&+\frac{4 \left(x^{(1)} (x^{(2)})^3 x^{(3)}-2 (x^{(2)})^5\right)}{x^{(1)} (x^{(3)})^3}  \frac{\partial}{\partial x^{(4)}},\nonumber \\
 \gamma_3^{(4)}&=& \frac{3(-6(x^{(2)})^3+x^{(1)}x^{(2)}x^{(3)})}{(x^{(3)})^3} \frac{\partial}{\partial x}  +\frac{3 x^{(2)} \left(x^{(1)} x^{(2)} x^{(3)}-6 (x^{(2)})^3\right)}{x^{(1)} (x^{(3)})^3}\frac{\partial}{\partial x^{(1)}} \nonumber \\&& +\frac{3 \left(x^{(1)} x^{(2)} x^{(3)}-6 (x^{(2)})^3\right)}{x^{(1)} (x^{(3)})^2} \frac{\partial}{\partial x^{(2)}}+(4-\frac{24 (x^{(2)})^2}{x^{(1)} x^{(3)}})  \frac{\partial}{\partial x^{(3)}} \nonumber \\&&+(\frac{20 x^{(3)}}{3 x^{(2)}}-\frac{40 x^{(2)}}{x^{(1)}}) \frac{\partial}{\partial x^{(4)}},\nonumber 
 \end{eqnarray}
 \begin{eqnarray}
 \gamma_4^{(4)}&=&27\frac{(x^{(2)})^4}{(x^{(3)})^4}\frac{\partial}{\partial x}+\frac{27 (x^{(2)})^5}{x^{(1)} (x^{(3)})^4}  \frac{\partial}{\partial x^{(1)}}+\frac{27 (x^{(2)})^4}{x^{(1)} (x^{(3)})^3}  \frac{\partial}{\partial x^{(2)}} \nonumber \\ &&+\frac{27 (x^{(2)})^4}{x^{(1)} (x^{(3)})^3} \frac{\partial}{\partial x^{(3)}}+\frac{60 (x^{(2)})^2}{x^{(1)} x^{(3)}}  \frac{\partial}{\partial x^{(4)}}.
\end{eqnarray}

From the null forms (\ref{4_ex_nul_for}) and the integrating factors (\ref{integral_fourth}) we can find the integrals using the relation (\ref{net17}) admitted by Eq.(\ref{fourth_example}). The integrals are found to be
\begin{eqnarray}
&&I_1=\frac{(x^{(2)})^4(\frac{-3}{2}(x^{(2)})^2+x^{(1)}x^{(3)})}{(x^{(3)})^4},~~
I_2=\frac{t}{3}+\frac{x^{(2)}}{x^{(3)}},\nonumber\\
&&I_3=x-\frac{9(x^{(2)})^3}{(x^{(3)})^2}+\frac{3x^{(1)}x^{(2)}}{x^{(3)}},~~
I_4=9\frac{(x^{(2)})^4}{(x^{(3)})^3}.
\end{eqnarray}
Using the above integrals, we can write the general solution of Eq.(\ref{fourth_example}) as
\begin{equation}
 x(t)=I_3+\frac{3(I_4^2-6I_2(-3 I_1+t)^2)}{6I_1 I_4-2I_4t}.
\end{equation}
Finally the same procedure can be followed for other two vector fields $\hat{v}_5$ and $\hat{v}_6$. However for these two vector fields we find that both the quantifiers turn out to functionally dependent ones. Here also we can start with any one of the methods discussed above, and interconnect every method with all other methods.

\subsection{$n^{th}$-order ODEs}
To verify the validity of our results in the case of $n^{th}$-order ODE we consider the following example, that is
\begin{equation}
 x^{(n)}=\frac{d^n x}{dt^n}=0.\label{nfree}
\end{equation}
Since the Lie point symmetries of Eq.(\ref{nfree}) are explicitly known, we consider Eq.(\ref{nfree}) as the suitable example to verify our results at the $n^{th}$-order. 

In the first column of Table 1, we present the Lie point symmetries of $n^{th}$-order ODE. From the Lie point symmetries we proceed to construct $\lambda$-symmetries, null forms, integrating factors, Darboux polynomials, first prolongation component of generalized $\lambda$-symmetries and finally the integrals for the $n^{th}$-order ODE through the procedure outlined in the previous section. The explicit expressions of all of them are given in Table 1. We have verified all these quantifiers satisfy their respective determining equations. The results support the theory developed in this paper is applicable to $n^{th}$-order ODEs. 

We mention here that the second-order free particle equation admits eight Lie point symmetries. However in the Table 1, one may observe that when $n=2$ (second-order ODE), one can get six Lie point symmetry generators only. The reason is that the six generators which are given in the Table 1 appear in every order starting from second-order and so we have generalized these six symmetry vector fields. The other two vector fields $x\frac{\partial}{\partial t}$ and $xt\frac{\partial}{\partial t}+x^2\frac{\partial}{\partial x}$ of the second-order ODE do not appear in the higher-order. Since these two vector fields are missing in all higher-order we have not included them in the Table 1.

\begin{sidewaystable}[h]
{\scriptsize
\begin{center}
\begin{tabular}{|@{}c@{}|@{}c@{}|@{}c@{}|@{}c@{}|@{}c@{}|@{}c@{}|@{}c@{}|@{}c@{}|@{}c@{}|}\hline
$V$ & $\lambda$ & $S$ & $X$ & $V$ & $(R=\Lambda)$ & $F(=M^{-1})$ & $\zeta^{(1)}$ & $I$\\\hline

$ $ & $ $ & $ $ & $ $ & $ $ & $ $ & $ $ & $ $ & $ $\\
$\frac{\partial}{\partial t}$ & $\frac{x^{(2)}}{x^{(1)}}$ & $S_{ij}=0,j=1$ & $0$ & $1$ & $-1$ & $-1$ & $-\frac{x^{(2)}}{x^{(1)}}$ & $x^{(n-1)}$ \\
$$ & $$ & $i=1,2,...,n-1$ & $$ & $$ & $$ & $$ & $$ & $$\\\hline

$$ & $$ & $$ & $$ & $$ & $$ & $$ & $$ & $ $\\
$\frac{\partial}{\partial x}$ & $0$ & $S_{ij}=0,j=1$ & $0$ & $1$ & $-1$ & $-1$ & $0$ & $x^{(n-1)}$ \\
$$ & $$ & $i=1,2,...,n-1$ & $$ & $$ & $$ & $$ & $$ & $$\\\hline

$$ & $$ & $S_{ij}=0,i=1,2,...,m-1$ & $X_{ij}=0,i=1,2,...,m-1$ & $$ & $$ & $$ & $$ & $ $\\\cline{3-4}
$t^m\frac{\partial}{\partial x}$ & $ $ & $S_{ij}=\frac{(j-1)!}{(-t)^{n-i}(j-(n-i+1))!}$ & $X_{ij}=\frac{(j-1)!(-t)^{j-(n-i+1)}}{(j-(n-i+1))!}$ & $$ & $$ & $$ & $$ & $ $\\
$$ & $$ & $i=n-1,n-2,...,m$ & $i=n-2,n-3,...,m$ & $$ & $$ & $$ & $$ & $$\\\cline{3-4}
$$ & $\frac{m}{t}$ & $j=n-(m-1),m\neq1$ & $j=n-(m-1),m\leq n-2,m\neq1$ & $t^{n-m}$ & $\frac{(-1)^jt^{n-m}}{(j-1)!}$ & $\frac{(j-1)!}{(-1)^j}$ & $m\frac{(-1)^jt^{n-m-1}}{(j-1)!}$ & $\displaystyle\sum_{m=n-j}^{n-1}\frac{x^{(m)(-t)^{m-(n-j)}}}{(m-(n-j))!}$\\\cline{3-4}
$m=n-1,$ & $ $ & $S_{ij}=\frac{(n-1)!}{(-t)^{\delta}(n-(\delta+1))!}$ & $X_{ij}=\frac{(-1)^{\alpha}(n-1)!t^{n-(\alpha+1)}}{(n-(\alpha+1))!}$ & $$ & $$ & $$ & $$ & $ $\\
$n-2,...,1$ & $ $ & $i=n-\delta,j=n,m=1$ & $i=n-\alpha,j=\alpha+1,\alpha+2,...,n$ & $$ & $$ & $$ & $$ & $$\\
$$ & $ $ & $\delta=1,2,3,...,(n-1)$ & $\alpha=2,3,4,...,(n-1),m=1$ & $$ & $$ & $$ & $$ & $$\\\hline

$$ & $ $ & $S_{ij}=\frac{(n-1)!}{(-t)^{\delta}(n-(\delta+1))!}$ & $X_{ij}=\frac{(-1)^{\alpha}(n-1)!t^{n-(\alpha+1)}}{(n-(\alpha+1))!}$ & $$ & $$ & $$ & $$ & $ $\\
$t\frac{\partial}{\partial t}$ & $\frac{x^{(2)}}{x^{(1)}}+\frac{1}{t}$ & $i=n-\delta,~j=n$ & $i=n-\alpha,j=\alpha+1,\alpha+2,...,n$ & $t^{n-1}$ & $\frac{(-1)^nt^{n-1}}{(n-1)!}$ & $\frac{(n-1)!}{(-1)^n}$ & $(\frac{x^{(2)}}{x^{(1)}}+\frac{1}{t})\frac{(-1)^nt^{n-1}}{(n-1)!}$ & $\displaystyle\sum_{m=n-j}^{n-1}\frac{x^{(m)(-t)^{m-(n-j)}}}{(m-(n-j))!}$\\
$$ & $$ & $\delta=1,2,3,...,(n-1)$ & $\alpha=2,3,4,...,(n-1)$ & $$ & $$ & $$ & $$ & $$\\\hline

$$ & $ $ & $S_{ij}=0$ & $$ & $$ & $$ & $$ & $$ & $$\\
$x\frac{\partial}{\partial x}$ & $\frac{x^{(1)}}{x}$ & $S_{(n-1)j}=\frac{-x^{(n-1)}}{x^{(n-2)}}$ & $0$ & $x^{(n-2)}$ & $\frac{x^{(n-2)}}{(x^{(n-1)})^2}$ & $(x^{(n-1)})^2$ & $\frac{x^{(1)}x^{(n-2)}}{x(x^{(n-1)})^2}$ & $\hat{I}=-t+\frac{x^{(n-2)}}{x^{(n-1)}}$\\
$$ & $$ & $i=1,2,...,n-2,j=n+1$ & $$ & $$ & $$ & $$ & $$ & $$\\\hline

$$ & $$ & $S_{ij}=\frac{(n-1)!}{(-t)^{\delta}(n-(\delta+1))!}$ & $X_{ij}=\frac{(-1)^{\alpha}(n-1)!t^{n-(\alpha+1)}}{(n-(\alpha+1))!}$ & $$ & $$ & $$ & $$ & $ $\\
$\frac{(n-1)}{2}xt\frac{\partial}{\partial x}$ & $a^{*}$ & $i=n-\delta,~j=n$ & $i=n-\alpha,j=\alpha+1,\alpha+2,...,n$ & $t^{n-1}$ & $\frac{(-1)^nt^{n-1}}{(n-1)!}$ & $\frac{(n-1)!}{(-1)^n}$ & $a^{*}\frac{(-1)^nt^{n-1}}{(n-1)!}$ & $\displaystyle\sum_{m=n-j}^{n-1}\frac{x^{(m)(-t)^{m-(n-j)}}}{(m-(n-j))!}$\\
$+\frac{t^2}{2}\frac{\partial}{\partial t}$ & $ $ & $\delta=1,2,3,...,(n-1)$ & $\alpha=2,3,4,...,(n-1)$ & $$ & $$ & $$ & $$ & $ $\\\hline
\end{tabular}\\
$a^{*}$ is given by $\frac{(n-1)x-x^{(2)}t^2+(n-3)tx^{(1)}}{(n-1)xt-x^{(1)}t^2}$
\caption{Interconnections among various quantities admitted by the linear ODE $x^{(n)}=0$, $n \geq 3$. For $n=2$, see text.}
\end{center}}
\end{sidewaystable}

\section{Conclusion}
In this work, we have developed a systematic procedure to make interconnections among several analytical methods which are available in the literature for $n^{th}$ order nonlinear ODEs. In the Prelle-Singer procedure, the important quantities are the null forms $S_i,~i=1,2,3,...,n-1$ and the integrating factor $R$.  By introducing suitable transformations in the null forms $S_i$ and the integrating factor $R$, we can relate the other methods such as $\lambda$-symmetries, Lie point symmetries, Jacobi last multiplier, Darboux polynomials, adjoint-symmetries method and generalized $\lambda$-symmetries with the Prelle-Singer procedure. The $\lambda$-symmetries, Lie point symmetries and generalized $\lambda$-symmetries are connected with the Prelle-Singer procedure through null forms $S_i$. The other methods, Jacobi last multiplier, Darboux polynomials and the adjoint-symmetries are related with the Prelle-Singer method by the integrating factors and the expression $V$ which can be obtained from the null form $S_{n-1}$. By utilizing these interconnections, one can find the relevant quantities associated with the other methods without solving their own determining equations. We have explained the above said interconnections with several examples, that is, second-order, third-order, fourth-order and $n^{th}$-order ODEs. 

Finally, we would to like to point out that the interconnections which we have presented in the manuscript will work for partially integrable systems also (that is lesser number of integrals than the required $n$ independent integrals). For example, if we know one integral, we can obtain the quantities associated with the other methods corresponding to this integral using the interconnections which we have discussed above.

Also, there may be situations where the above interconnections are difficult to acheive. For example, consider the second-order ODE $\ddot{x}+\frac{t^2}{4 x^3}+x+\frac{1}{2 x}=0$ which has no Lie point symmetries but has a $\lambda$-symmetry vector field $\tilde{V}=x\frac{\partial}{\partial x}$ with the $\lambda$-function $\lambda=\frac{t}{x^2}$ \cite{Mur1}. For this $\lambda$-function, we can have the expression $\tilde{V}^{[\lambda,(1)]}I=x\frac{\partial I}{\partial x}+(\frac{t}{x}+\dot{x})\frac{\partial I}{\partial \dot{x}}=0$. Replacing the latter expression with the Prelle-Singer method expressions $\frac{\partial I}{\partial x}=-RS$ and $\frac{\partial I}{\partial \dot{x}}=-R$, and solving it, we can get the null form $S$ as $S=-\frac{t+x \dot{x}}{x^2}$. The corresponding integrating factor $R$ can be found as $R=\frac{1}{x \left(\frac{(t+2 x \dot{x})^2}{4 x^4}+1\right)}$. So we can find the rest of the integrability quantifiers. But in the case where $\xi \neq 0$, it is very difficult to solve the expression $\tilde{V}^{[\lambda,(1)]}I=\xi \frac{\partial I}{\partial t} +\eta \frac{\partial I}{\partial x}+\eta^{(1)}\frac{\partial I}{\partial \dot{x}}=0$. If it is possible to solve the latter expression, one can find the other integrability quantifiers appropriately. Otherwise the interconnections may not work fully. In the same way, we can obtain more number of Darboux polynomials by making combinations of them. Doing so, we can obtain the Darboux polynomials with cofactor apart from $\phi_{\dot{x}}$. So the interconnections will work for arbitrary $\lambda$-symmetries, Jacobi last multipliers or Darboux polynomials provided we are able to solve the corresponding equations
\vspace{0.18cm}\\
\begin{small}
\textit{Ethics statement}. This research did not involve human or animal subjects.\\
\textit{Data accessibility}. This paper does not have any data.\\
\textit{Competing interests statement}. We have no competing interests.\\
\textit{Authors contributions statement}. All the authors have contributed equally to the research and to the writing up of the paper.\\
\textit{Funding statements}. RMS acknowledges the University Grants Commission (UGC-RFSMS), Government of India, for providing a Research Fellowship. The work of VKC is supported by INSA young scientist project. The work of MS forms part of a research project sponsored by Department of Science and Technology, Government of India.  The work of ML is supported by a Department of Science and Technology (DST), Government of India, IRHPA research project. ML is also supported by a DAE Raja Ramanna Fellowship and a DST Ramanna Fellowship programme.
\end{small}

\section*{References}
\begin{enumerate} 
 
\bibitem{suba1}
Mohanasubha R, Chandrasekar VK, Senthilvelan M, Lakshmanan  M. 2014 Interplay of symmetries, null
forms, Darboux polynomials, integrating factors and Jacobi multipliers in integrable second-order differential equations. \textit{Proc. R. Soc. A.}\, \textbf{470}, 20130656.

\bibitem{suba2}
Mohanasubha R, Chandrasekar VK, Senthilvelan M, Lakshmanan  M. 2015 Interconnections between various analytic approaches applicable to third-order nonlinear differential equations. \textit{Proc. R. Soc. A.} \, \textbf{471}, 20140720.

\bibitem{Mur3}
Muriel C, Romero JL. 2009 First integrals, integrating factors and $\lambda$-symmetries of second-order differential equations. \textit{J. Phys. A: Math. Theor.} \, \textbf{42}, 365207.

\bibitem{Prelle}
Prelle M, Singer M. 1983 Elementary first integrals of differential equations. \textit{Trans. Am. Math. Soc.}\, \textbf{279}, 215-229.

\bibitem{duart}
Duarte LGS, Duarte SES, da Mota ACP, Skea JEF. 2001 Solving the second-order ordinary differential equations by extending the Prelle-Singer method. \textit{J. Phys. A.} \, \textbf{34}, 3015-3024.

\bibitem{anna_2nd}
Chandrasekar VK, Senthilvelan M, Lakshmanan  M. 2005 On the complete integrability and linearization of certain second-order nonlinear ordinary differential equations. \textit{Proc. R. Soc. A.} \, \textbf{461}, 2451-2477.

\bibitem{anna_3rd}
Chandrasekar VK, Senthilvelan M, Lakshmanan  M. 2006 On the complete integrability and linearization of nonlinear ordinary differential equations. II. Third-order equations. \textit{Proc. R. Soc. A.} \, \textbf{462}, 1831-1852.

\bibitem{anna_n}
Chandrasekar VK, Senthilvelan M, Lakshmanan  M. 2005 Extended Prelle-Singer method and integrability/solvability of a class of nonlinear $n^{th}$ order ordinary differential equations. \textit{J. Nonlinear Math. Phys.}\, \textbf{12}, 184-201.

\bibitem{olv}
Olver PJ. 1993 \textit{Applications of Lie groups to differential equations}. New York: Springer-Verlag.


\bibitem{Bluman}
Bluman GW, Anco SC. 2002 \emph{Symmetries and integration methods for differential equations.} New York: Springer.

\bibitem{ci7}
Stephani H. 1989 \emph{Differential equations: Their solutions using symmetries.} Cambridge: Cambridge University Press.


\bibitem{Mur1}
Muriel C, Romero JL. 2001 New methods of reduction for ordinary differential equations. \textit{IMA J. Appl. Math.}\, \textbf{66}, 111-125.


\bibitem{Mur4}
Muriel C, Romero JL. 2012 Nonlocal symmetries, telescopic vector fields and $\lambda$-symmetries of ordinary differential equations. \textit{SIGMA.}\, \textbf{8}, 106.

\bibitem{gae1}
Gaeta G. 2009 Twisted symmetries of differential equations. \textit{J. Nonlinear Math. Phys.} \textbf{16},  107-136.

\bibitem{gae2}
Gaeta G. 2014 Simple and collective twisted symmetries. \textit{J. Nonlinear Math. Phys.} \textbf{21}, 593-627.

\bibitem{jac}
Jacobi CGJ. 1844 Sul principio dell'ultimo moltiplicatore, e suo uso come nuovo prin-cipio generale di meccanica. \textit{Lettere ed Arti Tomo}\, \textbf{99}, 129-146.

\bibitem{jac1}
Jacobi CGJ. 1886 \textit{Vorlesungen $\ddot{u}$ber Dynamik. Nebst f$\ddot{u}$nf hinterlassenen Abhandlungen desselben herausgegeben von A Clebsch.} Berlin: Druck und Verlag von Georg Reimer.

\bibitem{Nuc}
Nucci MC. 2005 Jacobi last multiplier and Lie symmetries: a novel application of an old relationship. \textit{J. Nonlinear Math. Phys.} \, \textbf{12}, 284-304.

\bibitem{darb}
Darboux G. 1878 M$\acute{e}$moire sur les $\acute{e}$quations diff$\acute{e}$rentielles alg$\acute{e}$briques du premier ordre et du premier degr$\acute{e}$. \textit{Bull. Sci. Math.}\, \textbf{2},  60-96, 123-144, 151-200.




\bibitem{dumor}
 Dumortier F, Llibre J, Art$\acute{e}$s JC. 2006 \textit{Qualitative theory of planar differential systems.} Berlin: Springer-Verlag.


\bibitem{blu_pap}
Bluman GW, Anco SC. 1998 Integrating factors and first integrals for ordinary differential equations. \textit{Euro. J. Appl. Math.}\, \textbf{9}, 245-259.

\bibitem{pucci}
Pucci E, Saccomandi G. 2002 On the reduction methods for ordinary differential
equations. \textit{J. Phys. A: Math. Gen.}\, \textbf{35}, 6145-6155.



\bibitem{leach}
Leach PGL, Andriopoulos K. 2007 Nonlocal symmetries past, present and future. \textit{Appl. Anal. and Discrete Math.} \, \textbf{1}, 150.


\end{enumerate}

\end{document}